\newcommand{\set}[1]{\left\{#1\right\}}
\newcommand{\floor}[1]{\left\lfloor #1 \right\rfloor}
\newcommand{\ceil}[1]{\left\lceil #1 \right\rceil}
\newcommand{\eps}{\varepsilon}
\DeclareMathOperator{\E}{\mathbb{E}}
\newcommand{\hide}[1]{}
\renewcommand{\include}{\input}
\newcommand{\todo}[1]{{\color{red}\bf [TODO: #1]}}
\renewcommand{\paragraph}[1]{\smallskip\noindent\textbf{#1: }}
\newcommand{\toblivious}{\tau}
\newcommand{\tphase}{\psi}
\newcommand{\newinformed}{\kappa}
\newtheorem{theorem}{Theorem}[section]
\newtheorem{lemma}[theorem]{Lemma}
\newtheorem{claim}[theorem]{Claim}
\newtheorem{definition}{Definition}[section]
\newtheorem{remark}{Remark}[section]
\newenvironment{theoremR}[1]
  {\def\repthmref{#1}\theoremRinner (restated)}{\endtheoremRinner}
\title{The Cost of Global Broadcast in Dynamic Radio Networks\footnote{Research supported by ERC Grant No. 336495 (ACDC)}}
\author{Mohamad Ahmadi, Abdolhamid Ghodselahi, Fabian Kuhn, Anisur
  Rahaman Molla\\
Department of Computer Science\\
University of Freiburg\\
79110 Freiburg, Germany\\
{\small \texttt{\{mahmadi, hghods, kuhn, armolla\}@cs.uni-freiburg.de}}}
\date{}
\begin{document}

\maketitle              

\thispagestyle{empty}

\begin{abstract}
  We study the single-message broadcast problem in dynamic radio
  networks. We show that the time complexity of the problem depends on
  the amount of stability and connectivity of the dynamic network
  topology and on the adaptiveness of the adversary providing the
  dynamic topology. More formally, we model communication using the
  standard graph-based radio network model. To model the dynamic
  network, we use a generalization of the synchronous dynamic
  graph model introduced in [Kuhn et al., STOC 2010]. For integer
  parameters $T\geq 1$ and $k\geq 1$, we call a dynamic graph
  $T$-interval $k$-connected if for every interval of $T$ consecutive
  rounds, there exists a $k$-vertex-connected stable
  subgraph. Further, for an integer parameter $\tau\geq 0$, we say that
  the adversary providing the dynamic network is $\tau$-oblivious if for
  constructing the graph of some round $t$, the adversary has access
  to all the randomness (and states) of the algorithm up to round
  $t-\tau$.

  As our main result, we show that for any $T\geq 1$, any $k\geq 1$,
  and any $\tau\geq 1$, for a $\tau$-oblivious adversary, there is a
  distributed algorithm to broadcast a single message in time
  $O\big(\big(1+\frac{n}{k\cdot\min\set{\tau,T}}\big)\cdot n\log^3
  n\big)$. We further show that even for large interval
  $k$-connectivity, efficient broadcast is not possible for the usual
  adaptive adversaries. For a $1$-oblivious adversary, we show that
  even for any $T\leq (n/k)^{1-\eps}$ (for any constant $\eps>0$) and for
  any $k\geq 1$, global broadcast in $T$-interval $k$-connected
  networks requires at least $\Omega(n^2/(k^2\log n))$ time. Further,
  for a $0$-oblivious adversary, broadcast cannot be solved in
  $T$-interval $k$-connected networks as long as $T<n-k$.
\end{abstract}
\vspace{.5cm}

\section{Introduction}
\label{sec:intro}

By now, a rich theory on algorithms for large-scale wireless networks
exists and we have a rather precise understanding of the complexity of
many basic computation and communication tasks for a variety of
wireless network models. While many wireless communication models and
modeling assumptions have been studied, to a large part, the
considered models all share one basic property. Most of the existing
work is based on static networks and on communication models where
wireless signal reception is modeled in a completely deterministic
way. For example, in the classic radio network model, a wireless
network is modeled as a graph and a node in the graph can receive a
message transmitted by some neighbor if and only if no other neighbor
transmits at the same time, e.g., \cite{bgi1,moscibroda05}. In the
SINR (or physical) model, nodes have fixed coordinates in some
geometric space and a transmitted signal can be successfully received
if and only if the signal-to-noise-and-interference ratio at the
receiver is above a certain fixed threshold, e.g.,
\cite{gupta:2000,moscibroda06}.

The situation in actual networks however is quite different and
wireless signal reception might behave in a rather unpredictable
way. There can be multiple sources for interference which cannot be
controlled by a distributed algorithm and signal propagation depends
on various properties of the environment. As a result, we often obtain
wireless communication links with unreliable behavior
\cite{kim06,newport:2007,ramachandran07,srinivasan08,yarvis02}. In
addition, wireless devices might be mobile leading to a potentially
completely dynamic network topology.

As a consequence, in recent years, researchers in the wireless
algorithms community have also started to consider radio network
models which exhibit nondeterministic behavior and sometimes general
dynamic topologies, e.g.,
\cite{structuringDG,clementi09,clementi12,fernandezanta12,dualgraph,obliviousDG}.
In the present paper, we continue this line of research and study the
global broadcast problem in dynamic radio networks for a range of
modeling assumptions. Note that in ordinary, static radio networks,
albeit appearingly simple, global broadcast is one of the best studied
problems in the area, (see, e.g.,
\cite{bgi2,bgi1,chlamtac:1985,jurdzinski14,newport-hitting} and many
others). We model a dynamic network by applying the dynamic network
model introduced in \cite{kuhn_stoc10}. Time is divided into
synchronous rounds and a wireless network is modeled as a dynamic
graph with a fixed set of $n$ nodes and a set of edges which can
change from round to round. For two parameters $T\geq 1$ and
$k\geq 1$, a dynamic graph is called \emph{$T$-interval $k$-connected}
if for any interval of $T$ consecutive rounds, the set of edges which
are present throughout these $T$ rounds induces a graph with vertex
connectivity at least $k$ (in \cite{kuhn_stoc10}, the model was only
introduced for $k=1$). We refer to \cite{kuhn:2011:survey} for a more
thorough discussion of the model of \cite{kuhn_stoc10} and of several
earlier related dynamic network models (e.g.,
\cite{avin08,baumann09,clementi09,odell05}).

Communication is modeled by using the standard
radio network model. In each round, each node can either transmit a message or
listen. A listening node successfully receives a message transmitted
by a neighbor in the current graph if and only if no other neighbor
transmits in the same round. We assume that nodes cannot detect
collisions, i.e., whether $0$ or more than $1$ neighbors transmit is
indistinguishable for a listening node. Note that the described
dynamic network model does not only allow to model topology changes
due to arbitrary node mobility. It also allows to model unreliable
links where the presence/availability can change for various reasons.

We assume that the dynamic graph is provided by a worst-case
adversary. As we study randomized distributed protocols, we need to
specify to what extent the adversary can adapt to the random decisions
of the nodes when determining the sequence of network topologies. For
the adaptiveness of the adversary, we use a more fine-grained
classification than what is usually done. For an integer parameter
$\toblivious\geq 0$, we say that the adversary is $\toblivious$-oblivious if for
determining the graph in round $r$, the adversary knows the randomness
of all nodes of all the rounds up to round $r-\toblivious$. Typically, only
the extreme cases are studied. An adversary which does not have access
to the random decisions of the algorithm (i.e., $\toblivious=\infty$) is
called an \emph{oblivious} adversary, whereas an adversary which has
access to the randomness of the algorithm is called an \emph{adaptive}
adversary. If the adversary even has access to the randomness of the
current round ($\toblivious=0$), it is called \emph{strongly adaptive},
otherwise ($\toblivious=1$), it is called \emph{weakly adaptive}. For more
precise formal definitions of the modeling assumptions, we refer to
\cref{sec:model}.

In our paper, we consider the problem of broadcasting a single message
from a source node to all the nodes of a dynamic network. The most
relevant previous work in the context of the present work appeared in
\cite{clementi09,obliviousDG,dualgraph}. In \cite{clementi09}, it is
shown that in $1$-interval $1$-connected networks (i.e., the graph is
connected in every round)\footnote{In \cite{clementi09}, the
  connectivity condition on the dynamic network is phrased differently
  and slightly more general.}, the complexity of global broadcast for
a $1$-oblivious adversary is $\Theta(n^2/\log n)$. In \cite{dualgraph}
and \cite{obliviousDG}, $\infty$-interval $1$-connected graphs are
considered (i.e., there is a stable connected subgraph which is
present throughout the whole execution). In \cite{dualgraph}, it is
shown that even for a $0$-oblivious adversary, it is possible to solve
broadcast in $O(n\log^2 n)$ rounds and it is shown that $\Omega(n)$
rounds are necessary even if the stable connected subgraph has
diameter $2$. In \cite{obliviousDG}, it is shown that when only
assuming an $\infty$-oblivious adversary, the running time can be
improved to $O((D+\log n)\log n)$, where $D$ is the diameter of the
stable connected subgraph. Note that in this case, the algorithm in
\cite{obliviousDG} achieves essentially the same time complexity as is
possible in static graphs of diameter $D$
\cite{bgi1,kushilevitz1998omega,newport-hitting}. In \cite{obliviousDG}, it is also
shown that for a $1$-oblivious adversary, $\Omega(n/\log n)$ rounds
are necessary even for $D=2$.

\subsection{Contributions}
\label{sec:contributions}

In the following, we state the results of the paper. For formal
details regarding problem statement and modeling not specified in the
introduction, we refer to Section \ref{sec:model}. Our main result is
a randomized broadcast algorithm for the described dynamic radio
network model. The algorithm (and also partly its analysis) is based
on a combination of the techniques used in \cite{clementi09} and
\cite{dualgraph}. We prove the following main theorem.

\begin{theorem}\label{thm:upperbound}
  Let $T\geq 1$, $\toblivious\geq 1$, and $k\geq 1$ be positive integer
  parameters. Assume that the adversary is $\toblivious$-oblivious. Then, in
  a dynamic $T$-interval $k$-connected $n$-node radio network, with
  high probability, single message broadcast can be solved in time
  \[
  O\left(\left(1+\frac{n}{k\cdot \min\set{\toblivious,T}}\right)\cdot n\log^3 n\right).
  \]
\end{theorem}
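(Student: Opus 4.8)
The plan is to design a randomized broadcast algorithm and analyze it by tracking how the set of \emph{informed} nodes (those that have received the source message) grows over time. The running time bound $O\big(\big(1+\frac{n}{k\cdot\min\set{\toblivious,T}}\big)\cdot n\log^3 n\big)$ strongly suggests a phase-based approach: divide the execution into phases of length $\psi := \Theta(\min\set{\toblivious,T}\cdot\text{polylog } n)$ rounds, and argue that in each phase the number of informed nodes grows by an additive amount proportional to $k$, or alternatively by a multiplicative factor until the informed set is large. The factor $\min\set{\toblivious,T}$ is the crux: a phase can only be as long as the window over which (a) the adversary's knowledge is stale (the $\toblivious$-obliviousness), and (b) the connectivity subgraph is guaranteed stable (the $T$-interval condition). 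Fixing a transmission schedule at the start of each phase turns a $\toblivious$-oblivious adversary into an effectively oblivious one \emph{for the duration of that phase}, which is exactly what makes progress provable.

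Concretely, I would proceed as follows. First I would import the decay/back-off technique from the static radio network literature (as used in \cite{clementi09,bgi1}): within a phase, each informed node transmits the message according to a randomized schedule that cycles through transmission probabilities $1, 1/2, 1/4, \dots, 1/n$ (a Decay procedure), so that any uninformed node with at least one informed neighbor receives the message without collision within $O(\log n)$ rounds with constant probability, and hence with high probability after $O(\log^2 n)$ repetitions. Second, I would invoke the $k$-connectivity structure: in any interval of $T$ rounds there is a stable $k$-vertex-connected subgraph $G$, and by Menger's theorem the cut between the informed set $S$ and its complement in $G$ has at least $k$ edges crossing it (as long as $S \neq \emptyset$ and $S$ is not everything), so there are at least $\min\set{k, |V \setminus S|}$ uninformed nodes adjacent to $S$. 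Running the decay procedure therefore informs $\Theta(k)$ new nodes per successful sub-phase. Third, I would bound the number of phases: each phase of length $\Theta(\min\set{\toblivious,T}\log^2 n)$ informs $\Theta(k\cdot\min\set{\toblivious,T})$ new nodes (one batch of $\Theta(k)$ per $O(\log^2 n)$-round decay block, with $\Theta(\min\set{\toblivious,T}/\log^2 n)$ blocks per phase), so $O\big(n/(k\cdot\min\set{\toblivious,T})\big)$ phases suffice to inform all $n$ nodes, and multiplying phase length by phase count yields the claimed bound (with the additive $1$ absorbing the regime where $k\cdot\min\set{\toblivious,T} \geq n$, where a single phase already suffices).

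The main obstacle, which I expect to require the most care, is reconciling two opposing requirements on the phase length and on how the schedule is committed. Because the adversary is only $\toblivious$-oblivious (not fully oblivious), it \emph{can} observe the randomness generated before the current phase and adapt the topology accordingly; the standard decay analysis, however, assumes the topology is fixed independently of the random transmission choices. The fix is to have nodes commit to all their random transmission decisions for an entire phase at its start, and to make each phase short enough (length at most $\toblivious$) that the adversary constructing the graphs within the phase has no access to \emph{that} phase's freshly drawn randomness; simultaneously the phase must be at most $T$ so that a single stable $k$-connected subgraph persists throughout, guaranteeing the cut argument holds for the whole phase. This is precisely why $\min\set{\toblivious,T}$ appears. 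I would therefore carefully set up the probabilistic argument so that, conditioned on the topology chosen by the adversary at the start of the phase (which may depend on past randomness but not on the current phase's randomness), the decay procedure's success probability is analyzed over the fresh randomness alone; a union bound over the $O(n/(k\cdot\min\set{\toblivious,T}))$ phases and the decay blocks within them then delivers the high-probability guarantee. The secondary technical point — ensuring the cut always exposes $\Omega(\min\set{k,|V\setminus S|})$ frontier nodes and that the decay reaches them despite interference from the (possibly many) informed transmitters — is handled by the Menger bound combined with the decay procedure's robustness to varying neighborhood sizes.
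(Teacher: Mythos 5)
Your high-level skeleton matches the paper's: phases whose length is capped by both $\tau$ (so the backbone cannot depend on the phase's fresh randomness) and $T$ (so a stable $k$-connected backbone exists throughout), plus Menger to expose at least $\min\{k,|V\setminus S|\}$ frontier nodes. But the engine of your argument --- the Decay claim in your first step --- fails in this model, and the failure is fatal. Decay's guarantee (``a node with an informed neighbor is informed within $O(\log n)$ rounds with constant probability'') needs the listener's neighborhood to be uncorrelated with the probability schedule. Here the Decay schedule is deterministic, hence known to the adversary in advance; obliviousness to the current phase's randomness does not hide it. On top of the stable backbone the adversary may add arbitrary extra edges every round (extra edges never violate $T$-interval $k$-connectivity), and it may choose the backbone so that the $(I,U)$-cut has exactly $k$ edges, each frontier node having backbone-degree $1$. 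Then in every round with transmission probability $p\geq 1/(C|I|)$ it attaches each frontier node to all of $I$, forcing a collision with probability $1-o(1)$, while in rounds with $p<1/(C|I|)$ it strips each frontier node down to its single backbone neighbor, whose success probability is $O(p)$; summing the geometric schedule, a frontier node is informed with probability only $O(1/|I|)$ per Decay run --- no better than having everyone transmit with probability $1/n$, and with no speedup from $\min\{\tau,T\}$ at all. Quantitatively, your step three's accounting ($\Theta(k)$ new nodes per $O(\log^2 n)$-round block) would give total time $O\bigl(\frac{n}{km}\cdot m\log^2 n\bigr)=O\bigl(\frac{n\log^2 n}{k}\bigr)$ for $m=\min\{\tau,T\}$, which is not the claimed bound and, for $\tau=T=k=1$, contradicts the paper's own lower bound (Theorem \ref{thm:lowerbound}) and the $\Omega(n^2/\log n)$ bound of \cite{clementi09}; so the claimed per-phase progress is provably unattainable. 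There is also an internal inconsistency: you correctly argue phases must have length at most $\min\{\tau,T\}$, yet set them to $\Theta(\min\{\tau,T\}\log^2 n)$.

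The missing idea is to base progress on events that are \emph{topology-independent}: global isolation, i.e., a node being the sole transmitter in the entire network, which all listening neighbors then hear regardless of what graph the adversary picked. The paper's phase of length $\psi=\min\{\tau,T,n/2k\}$ has two halves. In the first half all informed nodes transmit with probability $1/n$, so any fixed informed node is isolated with probability at least $\frac{1}{n}\bigl(1-\frac{1}{n}\bigr)^{n-1}>\frac{1}{en}$ per round, whatever the topology; since each uninformed backbone component has at least $k$ informed backbone neighbors, the component acquires a newly informed member (``becomes available'') with probability $\Omega(k\psi/n)$. In the second half, only the nodes newly informed in this phase run the harmonic protocol of \cite{dualgraph}, whose busy/free-round analysis is again isolation-based and hence robust to the adversary's round-by-round edge insertions; it informs $\min\{z,\Theta(\psi/\log^2 n)\}$ nodes of each available component w.h.p. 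The resulting expected progress is roughly $\frac{k\psi}{n}\cdot\frac{\psi}{\log^2 n}$ nodes per phase, which leads to $O\bigl(n^2\log^3 n/(k\psi^2)\bigr)$ phases of length $\psi$ and exactly the theorem's bound --- a much slower rate than your Decay-based claim, which is precisely what the lower bound forces.
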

\begin{remark}
  Note that for small and for large values of $\min\set{\toblivious,T}$, one can
  do slightly better. It is straightforward to generalize the broadcast
  algorithm of \cite{clementi09} to complete single message broadcast
  in time $O\left(n^2/k\log n\right)$ in $1$-interval $k$-connected
  radio networks against a $1$-oblivious adversary. Using the result
  from \cite{dualgraph}, we also know that for a sufficiently large
  constant $c$ and $T\geq cn\log^2n$, single-message broadcast can be
  solved in $O(n\log^2n)$ rounds even for $\toblivious=0$. Our upper bound
  therefore beats previous results for
  $\min\set{\toblivious,T}=\omega(\log^4 n)$ and $T=O(n\log^2 n)$.
\end{remark}

In addition to the upper bound of Theorem \ref{thm:upperbound}, we
also prove a lower bound which essentially shows that even for very
large values of $T$, some relaxation on the standard adaptive
adversaries is necessary in order to get an upper bound which improves
with $T$. For $\toblivious=1$, we show that at least for small $k$, the
generalized upper bound of \cite{clementi09} is essentially
optimal. The lower bound can be seen as a generalization of the simple
$\Omega(n^2/\log n)$ lower bound for $k=1$ and $T=1$ proven in
\cite{clementi09}.

\begin{theorem}\label{thm:lowerbound}
  For every constant $\eps>0$ and every $T\leq (n/k)^{1-\eps}$, the
  expected time to solve single-message broadcast in $T$-interval
  $k$-connected radio networks against a $1$-oblivious adversary is at
  least
  \[
  \Omega\left(\frac{n^2}{k^2\log n}\right).
  \]
\end{theorem}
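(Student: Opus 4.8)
The plan is to fix a single $1$-oblivious adversary and show that it keeps the informed set growing so slowly that informing all $n$ nodes must take $\Omega(n^2/(k^2\log n))$ rounds in expectation. Write $I$ for the current informed set and $m=\lvert I\rvert$. Throughout, the adversary maintains a stable $k$-vertex-connected subgraph whose cut between $I$ and the uninformed nodes is crossed by exactly $k$ vertex-disjoint stable edges; call the informed endpoints $w_1,\dots,w_k$ the \emph{exits} and the uninformed endpoints $u_1,\dots,u_k$ the \emph{frontier}. Because the adversary is $1$-oblivious, when it commits the round-$r$ topology it already knows every node's transmission probability $q_v$ for round $r$ (it only misses the fresh coin flips). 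For a set $G\subseteq I$ let $f(G)=\sum_{w\in G}q_w\prod_{w'\in G\setminus\set{w}}(1-q_{w'})$ be the probability that exactly one node of $G$ transmits; a frontier node whose informed neighborhood equals $G$ can become newly informed in round $r$ only when this event occurs, so its reception probability is at most $f(G)$. Since edges present in only a single round do not change the intersection of the edge sets over any window and hence do not affect $T$-interval $k$-connectivity, the adversary may freely attach extra informed neighbors to each $u_i$ that round, constrained only to keep the $k$ neighborhoods $G_1,\dots,G_k$ disjoint with $w_i\in G_i$.

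First I would establish the combinatorial core: for any probabilities $\set{q_v}_{v\in I}$ the adversary can choose disjoint $G_1,\dots,G_k$ (with $w_i\in G_i$) so that $\sum_{i=1}^k f(G_i)=O\!\big(k^2\log(m/k)/m\big)$, which bounds the expected number of frontier nodes informed in the round. By symmetry and concavity of the value the algorithm can guarantee, the extremal (algorithm-optimal) case is a uniform assignment $q_v=q$, for which a group of size $s$ contributes $sq(1-q)^{s-1}$. The adversary plays one of two moves per group: a singleton, contributing $q$, or a group of size up to the budget $m/k$, contributing $\Theta\!\big(\tfrac{m}{k}\,q\,e^{-qm/k}\big)$, which drops below the singleton value once $qm/k\gtrsim\log(m/k)$. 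Balancing the two moves — exactly the optimization behind the $k=1$, $T=1$ bound of \cite{clementi09}, but with per-group budget $m/k$ in place of $m$ — pins the game value at $q^\star=\Theta\!\big(\tfrac{k}{m}\log\tfrac{m}{k}\big)$ and gives the per-round bound $\E[\Delta m\mid I]=O\!\big(k^2\log(m/k)/m\big)=O\!\big(k^2\log n/m\big)$.

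Next I would realize this one-round game inside a genuine $T$-interval $k$-connected dynamic graph, which is where the interval hypothesis is used. Since the intersection of the edge sets over any window of $T$ rounds must itself be $k$-connected, the $k$ crossing edges — hence the exits $w_1,\dots,w_k$ — cannot be re-chosen every round; the adversary therefore works in blocks of $\Theta(T)$ rounds, keeps the $k$-connected backbone and the exits fixed within a block, and re-selects the exits and re-shuffles the frontier among the uninformed population at each block boundary. Within a block it still adds the collision-inducing extra edges adaptively each round, so the per-round bound applies with $w_i$ forced into $G_i$; the only way the algorithm escapes is a round in which some exit $w_i$ is the \emph{unique} informed transmitter, in which case $u_i$ leaks through. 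Choosing, at the block start, the $k$ informed nodes minimizing the probability of being a unique transmitter during the block, and using that there is at most one unique transmitter per round, bounds the expected leakage by $kT/m$ per block, a $1/(k\log n)$ fraction of the collision budget $O(Tk^2\log n/m)$ and hence lower order. The role of $T\le (n/k)^{1-\eps}$ is to guarantee that the expected number of nodes newly informed per block, $O(Tk^2\log n/m)$, stays $O(k)$ for all $m\le n$; this keeps the block-start cut a valid invariant and lets the adversary cleanly re-plan block by block.

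Finally I would sum up with a drift argument on $m$. From $\E[\Delta m\mid m]=O(k^2\log n/m)$ and the fact that $m$ is non-decreasing with per-round increments at most $k$, I take the convex potential $\phi$ with $\phi(m+1)-\phi(m)=\Theta\!\big(m/(k^2\log n)\big)$, so that $\phi(n)=\Omega(n^2/(k^2\log n))$, while the expected drift of $\phi(m)-t$ is non-positive up to a convexity-plus-jump correction that telescopes to a lower-order $O(n/(k\log n))$ term; optional stopping then yields expected broadcast time $\Omega(n^2/(k^2\log n))$. I expect the main obstacle to be the middle step: simultaneously (i) maintaining $k$-vertex-connectivity of the persistent subgraph, (ii) still being able to inject the collision edges and to rotate exits and frontier fast enough, and (iii) controlling the leakage through the fixed exits, since it is the interplay of these three requirements — not the one-round game — that forces the quantitative restriction $T\le (n/k)^{1-\eps}$.
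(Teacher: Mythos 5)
Your middle step --- realizing the one-round collision game inside a genuine $T$-interval $k$-connected dynamic graph --- contains a fatal gap for $k\geq 2$, and it is exactly the obstacle you flag but do not resolve. Because \emph{every} sliding window of $T$ rounds must have a $k$-connected intersection graph, a frontier node $u_i$ that becomes informed must keep at least $k$ of its \emph{pre-informing} stable edges for roughly $T$ further rounds; only one of those (the exit edge) goes to $I$, so at least $k-1$ stable edges from the newly informed node into the uninformed region survive and cannot be deleted. An algorithm can exploit this: after the first leak (which by your own accounting occurs within the first block or two, since the per-block leak probability is $\Omega(Tk^2\log n/m)$ with $m\approx k$ initially), let all originally informed nodes go silent and let only the newest generation of informed nodes transmit, using a time-shared decay schedule. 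Since no other node in the network transmits, the adversary cannot manufacture collisions by adding edges --- a collision needs a second transmitter --- so each decay epoch of $O(\log n)$ rounds pushes $\mathcal{M}$ across the retained stable edges with constant probability. With the uninformed side a clique this finishes broadcast in $O(1)$ further rounds; even against a sparse $k$-connected uninformed-side graph the cascade completes in $O\big((n/k)\log n\big)$ rounds. This is far below $\Omega\big(n^2/(k^2\log n)\big)$ whenever $k\log^2 n \ll n$, so your adversary does not prove the theorem: a single leak does not cost one node, it can cost everything, and the ``leakage is lower order'' claim collapses. Note that this trap is precisely why the paper does \emph{not} attempt a direct construction for $k\geq 2$: it proves the $k=1$ case (where a newly informed node needs only \emph{one} stable edge, its bridge edge, and can therefore be cut off from $U$) via a reduction to an abstract periodic hitting game, and then obtains general $k$ by the clique blow-up $H_k(G)$, in which a newly informed $k$-clique gets all of its $\geq k$ required stable edges internally and from the bridge $K_{k,k}$ to the already-informed side --- exactly the structural property your construction cannot have.

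Two further problems, both also handled by the paper's machinery: (i) re-selecting exits and re-shuffling the frontier ``at each block boundary'' violates $T$-interval connectivity, because a window straddling a boundary intersects to a graph with possibly no crossing edge at all; the paper's hitting game keeps each set $Y_j$ alive for \emph{two} phases and adds the winning condition ``$X(r+1)\cap Y_{j-1}=\emptyset$ or $X(r+1)\cap Y_j=\emptyset$'' precisely to guarantee a surviving stable crossing edge in every sliding window. (ii) The combinatorial core --- that uniform transmission probabilities are extremal for the constrained game with fixed exits, so that $\sum_i f(G_i)=O(k^2\log(m/k)/m)$ for \emph{arbitrary} probability vectors --- is asserted ``by symmetry and concavity'' but never proved; this is the analytical heart of any direct approach (it is the Clementi et al.\ lemma for $k=1$, and the $k$-exit version with disjointness and forced membership is new), and there is an additional circularity in choosing exits at block start to minimize unique-transmitter probability, since the execution distribution itself depends on that choice. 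Even if (i) and (ii) were repaired, the first gap remains and is structural, not technical.
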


In addition, we show that unless the interval connectivity is very
large, single-message broadcast cannot be solved in the presence of a
strongly adaptive ($0$-oblivious) adversary.

\begin{theorem}\label{thm:impossibility}
  For any $k\geq 1$ and any $T<n-k$, it is not possible to solve
  single-message broadcast in $T$-interval $k$-connected radio
  networks against a $0$-oblivious adversary.
\end{theorem}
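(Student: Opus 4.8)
The plan is to exhibit, for every broadcast algorithm, an explicit strategy for the $0$-oblivious adversary that maintains a $T$-interval $k$-connected dynamic graph in which one fixed node never receives the message, so that broadcast never completes. Fix a node $w$ different from the source and keep $w$ uninformed forever. As a stable backbone I would use a graph $H$ in which the $n-1$ nodes of $V\setminus\set{w}$ induce a $k$-connected subgraph $H'$ and $w$ is joined to \emph{every} other node, so that $\deg_H(w)=n-1$. Such an $H'$ exists because $n-1\geq k+1$ (guaranteed since $T\geq 1$ together with $T<n-k$ forces $n\geq k+2$), and since adding a vertex of degree at least $k$ to a $k$-connected graph yields a $k$-connected graph, $H$ itself is $k$-connected. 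The round graph $G_t$ will always be $H$ with at most a few edges incident to $w$ deleted.

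The per-round rule is where strong adaptivity is used: knowing the current transmission set $S_t$ before committing $G_t$, the adversary guarantees that $w$ never has exactly one transmitting informed neighbour. If $w$ transmits, or if $w$ listens while at least two nodes of $V\setminus\set{w}$ transmit, then $w$ receives nothing (its own transmission, or a collision), and all of $H$ is left intact. If exactly one node $x\in V\setminus\set{w}$ transmits and $x$ is \emph{uninformed}, then $w$ again hears no message, with $H$ intact. The only dangerous case is exactly one transmitter $x$ that is informed; here I delete the single edge $\set{x,w}$ in round $t$, so $w$ has no transmitting neighbour and stays uninformed. Thus in every round at most one edge is removed from $H$, and every removed edge is incident to $w$.

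It remains to verify $T$-interval $k$-connectivity, and this is exactly the step where $T<n-k$ is used and is the heart of the argument. For any window of $T$ consecutive rounds, the intersection of the round graphs is $H$ minus the edges deleted during those rounds; since at most one edge per round is deleted and all deleted edges touch $w$, the intersection retains all of $H$ except at most $T$ edges at $w$. Hence $w$ keeps degree at least $(n-1)-T\geq k$, which is equivalent to $T<n-k$, while the $k$-connected subgraph $H'$ on $V\setminus\set{w}$ is never touched; by the same vertex-addition fact the intersection is $k$-connected. The main obstacle is precisely this simultaneous reconciliation of blocking and connectivity: one must confine all deletions to a single protected vertex of degree $n-1$ and bound their number per window by $T$, which is what makes $n-k$ the exact threshold. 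I would conclude by observing that $w$ never becomes informed under this adversary, so single-message broadcast cannot be solved.
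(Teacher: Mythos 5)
Your proof is correct and follows essentially the same strategy as the paper's: a $0$-oblivious adversary that, whenever exactly one informed node transmits, deletes that node's edge(s) to a protected uninformed target for just that round, together with the counting argument that any window of $T$ rounds contains at most $T$ lone-transmitter rounds, so at least $k$ stable edges to the target survive precisely because $T<n-k$. The only cosmetic difference is that you protect a single node $w$ of backbone degree $n-1$ (deleting one edge per round and invoking the add-a-vertex-of-degree-$k$ fact explicitly), whereas the paper protects a whole set $B$ of size $n-(T+k)$ and cuts the lone transmitter's edges to all of $B$, making the same connectivity argument implicitly.
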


The discussion of the above result appears in
Section \ref{sec:impossibility}.  We note that the above theorem is
tight in the following sense. As soon as $T\geq n-k$, global broadcast
can be solved (with potentially exponential time complexity) and as
soon as $T = cn\log^2n$ for a sufficiently large constant $c$, we know
from \cite{dualgraph} that it can even be solved in time
$O(n\log^2 n)$. All results, as well as a comparison with previous
work are summarized in \Cref{tab:results}.

\begin{table}[t]
\centering
  \begin{tabular}{ | c | c | c | c | c |}
    \hline
    interval conn. & vertex conn. & adversary & complexity \\ \hline \hline
    $T=1$ & $k=1$ & $\toblivious=1$ & $\Theta\left(n^2/\log n\right)$ \cite{clementi09} \\
    \hline

    $T=\infty$ & $k=1$ & $\toblivious=0$ & $O\left(n\log^2n\right)$ / $\Omega(n)$, D=2\cite{dualgraph} \\
    \hline

    $T=\infty$ & $k=1$ & $\toblivious=\infty$ & $O\left((D+\log n) \cdot \log n\right)$\cite{obliviousDG} \\
    \hline

    $T=\infty$ & $k=1$ & $\toblivious=1$ & $\Omega(n/\log n)$, D=2 \cite{obliviousDG}\\
    \hline

    \boldmath$T\geq1$ & \boldmath$k\geq1$ &
    \boldmath$\toblivious\geq1$ &
    {\boldmath$O\left(\left(1+\frac{n}{k\cdot
            \min\set{\toblivious,T}}\right)\cdot n\log^3 n\right)$} \\
    \hline
    \boldmath$T \leq (n/k)^{1-\eps}$ & \boldmath$k \geq 1$ & 
    \boldmath$\toblivious=1$ & {\boldmath$\Omega\left(n^2/(k^2\log n)\right)$} \\
    \hline
    \boldmath$T < n-k$ & \boldmath$k \geq 1$ &
    \boldmath$\toblivious=0$ & \textit{\textbf{impossible}}\\
    \hline

  \end{tabular}
  \caption{An overview over the existing bounds on global broadcast in
    the dynamic radio network model. The results marked in bold are
    the results of the present paper. For the $T=\infty$ results, $D$
    refers to the diameter of the stable subgraph.}
  \label{tab:results}
\end{table}

\begin{remark} In \cite{kuhn_stoc10}, interval connectivity was
introduced (in particular) to study the problem of broadcasting
multiple messages in a dynamic network in a standard message passing
model. It is shown that interval connectivity $T$ allows to
essentially speed up multi-message broadcast by a factor of $T$. We
find it interesting that when considering a radio network model,
interval connectivity seems to provide a similar speed-up, even for
broadcasting a single message. Something similar also holds for graphs
with large vertex connectivity. In \cite{CGK-PODC14,CGK}, it is shown
that even on static graphs, vertex connectivity $k$ allows to speed up
multi message broadcast by essentially a factor $k$. Here, we show
that a similar speed up can be obtained in radio networks even for
broadcasting a single message.
\end{remark}


\section{Model and Problem Definition}
\label{sec:model}

\paragraph{Dynamic Network}
As described in Section \ref{sec:intro}, we adapt the synchronous
dynamic network model of \cite{kuhn_stoc10} to model dynamic
networks.\footnote{Similar dynamic network models have also been used
  before \cite{kuhn_stoc10}, for example in
  \cite{avin08,clementi09,odell05}.  For additional references and a
  thorough discussion, we refer to \cite{kuhn:2011:survey}.} Time is
divided into rounds such that for all $r\geq 1$, round $r$ starts at
time $r-1$ and ends at time $r$. A dynamic network is given by a
sequence of undirected graphs $\langle G_1, G_2, \dots\rangle$, where
$G_r=(V,E_r)$ is a static graph representing the network topology in
round $r$. The node set $V$ is a set of $n$ nodes corresponding to the
wireless devices in the network and the edge set $E_r$ is the set of
active communication links in round $r$. A dynamic graph
$\langle G_1,G_2,\dots\rangle$ is called \textit{$T$-interval
  $k$-connected} for integer parameters $T\geq 1$ and $k\geq 1$ if and
only if for all $r\geq 1$, the graph
\[
\bar{G}_{r,T} = (V,\bar{E}_{r,T}), \qquad 
\bar{E}_{r,T} := \bigcap_{r'=r}^{r+T-1} E_{r'}
\]
is a graph with vertex connectivity at least $k$.
	
\paragraph{Communication Model}
An $n$-node distributed algorithm $\mathcal{A}$ is defined by $n$
randomized processes which are assigned to the nodes of the dynamic
graph by an adversary.  For simplicity we use the term \textit{node
  $u$} to also refer to the process which is assigned to node $u$. In
each round, each node decides either to transmit a message or to
listen to the wireless channel. The behavior of the wireless channel
is modeled by using the standard radio network model first used in
\cite{bgi1,chlamtac:1985}. When node $u$ decides to transmit in round
$r$, its message reaches all of its neighbours in $G_r$.  A node $v$
which listens in round $r$ receives a message transmitted by a
neighbor $u$ if and only if $u$ is the only neighbor of $v$ in $G_r$
which is transmitting in round $r$. If no message reaches $v$ (no
neighbor is transmitting), $v$ receives silence, indicated by
$\bot$. If two or more messages reach $v$, $v$ also receives $\bot$,
i.e., $v$ cannot distinguish $2$ or more transmitting neighbors from
silence.

	
\paragraph{Adversary}
We assume that the network changes under the control of an
adversary. For any round $r$ the adversary has to determine $G_r$
based on the knowledge it has. For an integer $\tau \geq 0$, we call
an adversary \textit{$\tau$-oblivious} if for any $r\geq 1$, the
adversary constructs $G_r$ based on the knowledge of: (1) the algorithm
description, (2) the network topologies of rounds $1,\dots,r-1$, and
(3) the nodes' random choices of the first $r-\tau$ rounds.

\paragraph{Global Broadcast}
A distributed algorithm solving the global broadcast problem needs to
disseminate a single message $\mathcal{M}$ from a distinguished source
node to all the processes in the network. We assume that in a
distributed broadcast algorithm, non-source nodes are activated (and
can start to actively transmit) when they first receive the broadcast
message $\mathcal{M}$. Nodes that do not yet know $\mathcal{M}$ remain
silent.

\paragraph{Mathematical Notation}
For two integers $a\leq b$, $[a,b]$ denotes the set of all integers
between $a$ and $b$ (including $a$ and $b$). Further, for an integer
$a\geq 1$, we use $[a]$ as a short form to denote $[a]:=[1,a]$. We say
that a probability event happens with high probability (w.h.p.) if it
happens with probability at least $1-1/n^c$, where $n$ is the number
of nodes and $c>0$ is a constant which can be chosen arbitrarily large
by adjusting other constants.


\section{Upper Bound}
\label{sec:algorithm}

\subsection{Randomized Broadcasting Algorithm}
We now describe our randomized algorithm which solves broadcast in a
$T$-interval $k$-connected radio network against a
$\toblivious$-oblivious adversary. As stated in
\Cref{sec:contributions}, the algorithm has a time complexity of
$O\left(\left(1+n/(k\tphase)\right)\cdot n\log^3n\right)$ with high
probability where $\tphase:=\min\set{\toblivious,T,n/2k}$. In light of the
comment following Theorem \ref{thm:upperbound} in
\Cref{sec:contributions}, throughout \Cref{sec:algorithm}, we assume
that $\tphase=\Omega(\log^3 n)$ as otherwise, one can achieve a
stronger upper bound by just using an adapted version of
\cite{clementi09}. 

In the first round, the source node transmits the message to its
neighbors. Because we assume that each graph is $k$-vertex connected,
after one round, at least $k+1$ nodes know the message. From there on,
our randomized algorithm works in \textit{phases}. To simplify
notation, in the following, we ignore the first round and assume that
at time $0$, the algorithm starts with at least $k+1$ nodes which know
the broadcast message $\mathcal{M}$. The phases of the algorithm are
defined as follows.

\begin{definition}[Phase]\label{de:phase}
  The $j^{th}$ time interval of $\tphase$ consecutive rounds is called
  phase $j$, where $j$ is a positive integer. Hence, phase $j$ starts
  at time $(j-1)\tphase$ and ends at time $j\tphase$ and it consists of
  rounds $(j-1)\tphase+1,\dots,j\tphase$.
\end{definition} 

Let $t_v$ denote the round in which $\mathcal{M}$ is received by node
$v$ for the first time. In each round $t$ the set $V$ is partitioned
into following three subsets. The \textit{previously informed nodes}
$I(t)$ are the nodes that have received $\mathcal{M}$ in some phase
before the current phase.
Note that in the first phase, $I(t)$ consists of at least $k+1$
informed nodes. The nodes that have received $\mathcal{M}$ for the first
time in the current phase in some round before time $t$ are called
\textit{newly informed nodes}, and they are denoted by $N(t)$.
Finally, the set of \textit{uninformed nodes} at time $t$ is denoted
by $U(t):=V \setminus \set{I(t) \cup N(t)}$.

The algorithm can be seen as a combination of two existing protocols
which appeared in \cite{clementi09} and \cite{dualgraph}. The protocol
of \cite{clementi09} is a very basic one where all informed nodes
always try to transmit the message independently with the same uniform
probability. In the \emph{harmonic broadcast} protocol of
\cite{dualgraph}, informed nodes use harmonically decaying
probabilities to forward the message. In each phase of our algorithm,
in the first $\ceil{\tphase/2}$ rounds, a variant of the protocol of
\cite{clementi09} is applied and in the second $\floor{\tphase/2}$
rounds, the idea of the protocol of \cite{dualgraph} is applied. In
the following, the algorithm is described in detail.

\paragraph{First half of a phase} In the first $\ceil{\tphase/2}$
rounds of a phase, all informed nodes, i.e., all $ v \in \{I(t) \cup
N(t)\}$, transmit the message with probability $1/n$. 

\paragraph{Second half of a phase}
The nodes in $U(t)\cup I(t)$ remain silent throughout the second half
of a phase.  However, in each round $t$, any node $v \in N(t)$
transmits the message with probability $p_v(t)$, given by
		
\begin{equation}\label{eq:harmonicalg}
  \forall t > \floor{ \frac{t}{\tphase}} \cdot \tphase
  +\ceil{\frac{\tphase}{2}}\ :\ \forall v \in N(t)\, :\,
  p_v(t) := \frac{1}{1+ \floor{\frac{t-\hat{t}_v-1}{\mathcal{T}}}},
\end{equation}
where 
\[
\hat{t}_v := \begin{cases}
  \floor{ \frac{t}{\tphase}} \cdot \tphase
  +\ceil{\frac{\tphase}{2}}, & \text{ if } \floor{ t/\tphase} \cdot
  \tphase < t_v < \floor{t/\tphase} \cdot \tphase +\ceil{\tphase/2}\\
  t_v, & \text{ otherwise}
\end{cases}
\]
and $\mathcal{T}$ will be fixed in Lemma~\ref{le:isolated}. 

Thus, in the second half of a phase, only nodes participate which for
the first time receive $\mathcal{M}$ in the current phase. Each node
$v$ which gets newly informed in the phase executes the following
protocol. As soon as $v$ knows $\mathcal{M}$ and as soon as the second
half of the phase has started, $v$ starts transmitting $\mathcal{M}$
to its neighbors.  For the first $\mathcal{T}$ rounds, $v$ transmits
the $\mathcal{M}$ with probability 1, for the next $\mathcal{T}$
rounds $v$ transmits $\mathcal{M}$ with probability $1/2$, and the
probability for the next time intervals of $\mathcal{T}$ rounds
becomes $1/3$,$1/4$, etc.

\subsection{Analysis}

Recall that by the definitiuon of $\tphase$, we have $k\tphase\leq n/2$,
$\tphase\leq T$, and $\tphase \leq \toblivious$. The $T$-interval
$k$-connectivity of the dynamic network guarantees the existence of a
stable spanning subgraph with vertex connectivity of at least $k$
throughout the whole duration of every phase. We call this reliable
spanning subgraph the \textit{backbone} of the phase. Note that we may
have different backbones in different phases. Let $P(t)$ denote the
sum of transmitting probabilities of all the nodes in round $t$, i.e.,
$P(t):=\sum_{v\in V} p_v(t)$. For the analysis of our algorithm, we
say that round $t$ is \textit{busy} if $P(t) \geq 1$ and otherwise we
say that round $t$ is \textit{free}.
If the node $v$ is the only node transmitting in a round, we say that
node $v$ gets \textit{isolated} in that round.

For any phase $j$, let $\theta_0 :=j\tphase - \floor{\tphase/2}$,
i.e., $\theta_0$ is the time when the second half of the phase
starts. For $i>0$, we define $\theta_i>\theta_{i-1}$ to be the first
time such that in the time interval $[\theta_{i-1},\theta_i]$ (i.e.,
in rounds $\theta_{i-1}+1,\dots,\theta_i$) the number of busy rounds
equals the number of free rounds (see Figure \ref{fig:thetas}).

\begin{figure}[t]
\centering
\includegraphics[scale=0.9]{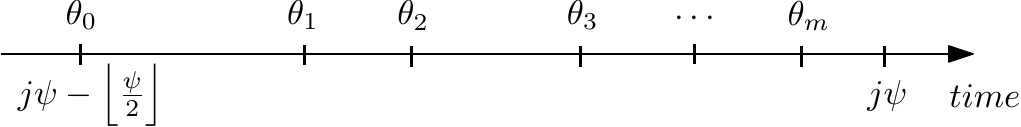}
\caption{Time intervals $[\theta_{i-1},\theta_i]$ with equal number of
  free and busy rounds where $i >1$.}
\label{fig:thetas}
\end{figure}

We further define $m\geq 0$ such that $\theta_m$ is the last such time
defined for a given phase. The case $m=0$ implies that throughout the
second half of the phase, the number of busy rounds is always larger
than the number of free rounds.

We use the following lemma adapted from Lemma 13 of \cite{dualgraph}.
\begin{lemma}\label{le:isolated}\cite{dualgraph}
  Consider a node $v$. Let $t > \hat{t}_v$ be such that at least half
  of the rounds $\hat{t}_v+1,\dots,t$ are free.  If
  $\mathcal{T} \geq \ceil{12\ln(n/\epsilon)}$ for some $\epsilon>0$,
  then with probability larger than $1-\epsilon/n$ there exists a
  round $t'\in \set{\hat{t}_v+1,\dots,t}$ such that $v$ is isolated in
  round $t'$.
\end{lemma}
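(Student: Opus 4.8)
The plan is to reduce the statement to a standard analysis of the harmonic broadcast protocol and to track the transmission probability of a single fixed node $v$ as the rounds progress. The crucial observation is that in the second half of a phase, once $v$ is newly informed (i.e., once $t>\hat{t}_v$), the node transmits in round $t$ with probability $p_v(t)=1/(1+\lfloor (t-\hat{t}_v-1)/\mathcal{T}\rfloor)$, so that over the $\mathcal{T}$ rounds $\hat{t}_v+1,\dots,\hat{t}_v+\mathcal{T}$ it transmits with probability $1$, over the next $\mathcal{T}$ rounds with probability $1/2$, and so on. I want to show that among the rounds $\hat{t}_v+1,\dots,t$, at least one of them is a round in which $v$ is the unique transmitter, under the hypothesis that at least half of these rounds are free (i.e.\ have $P(t)<1$). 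The point of the free-round hypothesis is exactly that in a free round, conditioned on $v$ transmitting, the probability that \emph{some other} node also transmits is bounded away from $1$, since $\sum_{u\neq v}p_u(t)\le P(t)-p_v(t)<1$, and a simple union/product bound makes $v$ isolated with constant probability in each such round.

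First I would fix $v$ and restrict attention to the free rounds in the window $\{\hat{t}_v+1,\dots,t\}$; by hypothesis there are at least $(t-\hat{t}_v)/2$ of them. For each such free round $t'$, the event ``$v$ is isolated in $t'$'' contains the intersection of ``$v$ transmits'' (probability $p_v(t')$) and ``no other node transmits''. Conditioned on $v$ transmitting, independence of the nodes' coin flips gives that no other node transmits with probability $\prod_{u\neq v}(1-p_u(t'))\ge 1-\sum_{u\neq v}p_u(t')> 1-P(t')\ge$ a positive constant (taking $P(t')<1$). Hence in each free round $v$ is isolated with probability at least $c\cdot p_v(t')$ for an absolute constant $c$. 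The probability that $v$ is \emph{never} isolated across all free rounds is then at most $\prod_{\text{free } t'}(1-c\,p_v(t'))\le\exp(-c\sum_{\text{free } t'}p_v(t'))$.

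The heart of the argument is a lower bound on $\sum_{\text{free }t'}p_v(t')$, and this is where the decaying harmonic schedule and the block structure of length $\mathcal{T}$ do the work. Because the probabilities are organized into consecutive blocks of $\mathcal{T}$ rounds with values $1,\tfrac12,\tfrac13,\dots$, even if an adversary places \emph{all} the free rounds as late as possible — concentrating them into the blocks with the smallest transmission probabilities — the number of blocks needed to absorb at least half of the window is only a constant factor smaller than the total number of blocks, so the sum of the $p_v(t')$ over free rounds is still $\Omega(\mathcal{T}\cdot H_{\Theta(\text{blocks})})=\Omega(\mathcal{T})$ up to the harmonic factor. More carefully, if the window spans $B$ blocks, the worst case puts the free rounds in the last blocks, contributing roughly $\mathcal{T}\sum_{i=B/2}^{B}1/i=\Omega(\mathcal{T})$ to the sum; combined with the exponential bound this gives a failure probability at most $\exp(-\Omega(\mathcal{T}))$. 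Choosing $\mathcal{T}\ge\lceil 12\ln(n/\epsilon)\rceil$ makes this at most $\epsilon/n$, as claimed.

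The step I expect to be the main obstacle is the last one: carefully showing that restricting to the free rounds cannot destroy too much of the harmonic mass $\sum p_v(t')$. Since the adversary controls which rounds are busy versus free (through the behavior of the other nodes, subject only to the counting constraint that at least half are free), I must verify that the worst-case placement still leaves a sum that is $\Omega(\log(n/\epsilon))$ after multiplying by the constant $c$ from the isolation bound. This is essentially a monotonicity/rearrangement argument on the harmonic block sums, and it relies on $t-\hat{t}_v$ being large enough that several blocks fit into the window; I would handle the degenerate case where $t-\hat{t}_v$ is small (only one or two blocks) separately, noting that there $p_v$ is close to $1$ on the free rounds and isolation is immediate. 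Since this lemma is explicitly adapted from Lemma 13 of \cite{dualgraph}, I would follow that argument's structure for the rearrangement bound rather than reinventing it.
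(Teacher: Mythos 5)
The paper itself contains no proof of this lemma: it is imported (with adapted notation) from Lemma~13 of \cite{dualgraph}, so your reconstruction has to stand on its own merits. Its overall architecture---restrict attention to the free rounds, argue that $v$ is isolated in each free round with probability $\Omega(p_v(t'))$, lower-bound the harmonic mass $\sum_{\mathrm{free}\ t'} p_v(t')$ by a rearrangement argument (worst case: the free rounds are pushed to the end of the window, where $p_v$ is smallest), and then exponentiate---is exactly the right one and matches the cited analysis in spirit.

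There is, however, one genuine gap, precisely at the step that is supposed to produce the absolute constant $c$. You claim that, conditioned on $v$ transmitting, no other node transmits with probability $\prod_{u\neq v}(1-p_u(t'))\geq 1-\sum_{u\neq v}p_u(t') > 1-P(t')\geq$ ``a positive constant.'' Freeness only gives $P(t')<1$, so $1-P(t')$ is positive but \emph{not} bounded below by any constant; it can be as small as $2^{-n}$. What the union bound actually yields is $1-\sum_{u\neq v}p_u(t') = \bigl(1-P(t')\bigr)+p_v(t') > p_v(t')$, i.e.\ an isolation probability of only $p_v(t')^2$ per free round, and this is genuinely insufficient: in a window spanning $B$ blocks the worst-case free rounds have $p_v(t')\approx 1/B$, so $\sum_{\mathrm{free}}p_v(t')^2 = O(\mathcal{T}/B)$, which tends to $0$ as the window grows, and your exponential bound collapses. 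The fix must be multiplicative rather than additive: in a free round every transmission probability is below $1$, and since all probabilities in this protocol have the form $1/i$ (or $1/n$), every $p_u(t')\leq 1/2$; hence $\prod_{u\neq v}\bigl(1-p_u(t')\bigr) \geq 4^{-\sum_{u\neq v}p_u(t')} \geq 4^{-P(t')} > 1/4$, using $1-x\geq 4^{-x}$ on $[0,1/2]$. With $c=1/4$ the rest of your outline does go through: note that the hypothesis forces every free round to lie in a block $i\geq 2$ (a round with $p_v=1$ is by definition busy), which both disposes of your ``degenerate case'' automatically and forces the window to span at least two blocks, so the rearrangement argument gives $\sum_{\mathrm{free}}p_v(t')\geq \mathcal{T}/3$ and a failure probability at most $e^{-\mathcal{T}/12}\leq \epsilon/n$, consistent with the stated constant $12$. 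A secondary point you gloss over: which rounds are free is itself random and correlated with $v$'s own earlier coin flips (whether $v$ transmitted affects who gets informed and hence the later values of $P(t')$), so the per-round isolation bounds must be stated conditionally on the history up to that round and then chained; this is routine, but a complete proof has to say it.
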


\begin{lemma}\label{le:isolatedintheta}
  For all phases, in each time interval
  $\left[\theta_{i-1},\theta_i\right]$, where $i \in [m]$, if round
  $\theta_{i-1}+1$ is busy then any node $v$ with
  $\hat{t}_v \in \set{\theta_{i-1},\dots.\theta_i-1}$ gets isolated in some
  round $t'\in \set{\hat{t}_v+1,\dots,\theta_i}$ with high probability.
\end{lemma}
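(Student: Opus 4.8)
The plan is to deduce the statement directly from Lemma~\ref{le:isolated}: it suffices to show that for every node $v$ with $\hat{t}_v\in\set{\theta_{i-1},\dots,\theta_i-1}$, at least half of the rounds $\hat{t}_v+1,\dots,\theta_i$ are free, and then invoke Lemma~\ref{le:isolated} with $t=\theta_i$ and with $\epsilon$ chosen as a small inverse polynomial in $n$. To control the number of free rounds I would track a signed counter. For $t\geq\theta_{i-1}$, let $D(t)$ denote the number of busy rounds minus the number of free rounds among $\theta_{i-1}+1,\dots,t$, so that $D(\theta_{i-1})=0$ and each round changes $D$ by exactly $+1$ (if busy) or $-1$ (if free).

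The key combinatorial step is to pin down the sign of $D$ on the interval. Since round $\theta_{i-1}+1$ is busy by hypothesis, $D(\theta_{i-1}+1)=+1$. By the definition of $\theta_i$ as the \emph{first} time after $\theta_{i-1}$ at which the numbers of busy and free rounds balance, we have $D(\theta_i)=0$ while $D(t)\neq 0$ for every $t\in\set{\theta_{i-1}+1,\dots,\theta_i-1}$. Because $D$ starts at $+1$ and moves in unit steps, it cannot become negative without first passing through $0$; hence $D(t)\geq 1$ for all $t\in\set{\theta_{i-1}+1,\dots,\theta_i-1}$. This unit-step first-return observation is, I expect, the only real content of the argument; the rest is bookkeeping, and the single point one must be careful about is translating the ``first balanced time'' definition of $\theta_i$ into a free-majority statement on an arbitrary sub-suffix.

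Finally, I would count free minus busy rounds on the suffix $\set{\hat{t}_v+1,\dots,\theta_i}$. This difference telescopes to $D(\hat{t}_v)-D(\theta_i)=D(\hat{t}_v)$, using $D(\theta_i)=0$. For $\hat{t}_v=\theta_{i-1}$ it equals $0$, so exactly half the rounds are free; for $\hat{t}_v\in\set{\theta_{i-1}+1,\dots,\theta_i-1}$ it equals $D(\hat{t}_v)\geq 1$, so strictly more rounds are free than busy. In either case at least half of the rounds $\hat{t}_v+1,\dots,\theta_i$ are free, so the hypothesis of Lemma~\ref{le:isolated} holds with $t=\theta_i>\hat{t}_v$. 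Recalling that $\mathcal{T}\geq\ceil{12\ln(n/\epsilon)}$, the lemma then gives a round $t'\in\set{\hat{t}_v+1,\dots,\theta_i}$ in which $v$ is isolated, with probability greater than $1-\epsilon/n$. Choosing $\epsilon=n^{-c}$ keeps $\mathcal{T}=O(\log n)$ and yields the high-probability guarantee for each individual $v$, and a union bound over the at most $n$ such nodes gives it simultaneously for all of them.
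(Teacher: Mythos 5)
Your proof is correct and takes essentially the same route as the paper: both reduce the claim to Lemma~\ref{le:isolated} by using the minimality of $\theta_i$ together with the hypothesis that round $\theta_{i-1}+1$ is busy to show that free rounds are at least as numerous as busy rounds on the suffix $\set{\hat{t}_v+1,\dots,\theta_i}$. The only difference is presentational: the paper argues by contradiction via the first balance time $\bar{t}$ after $\hat{t}_v$ and applies the lemma at $t=\bar{t}\leq\theta_i$, whereas your signed-counter telescoping establishes the free-majority property directly at $t=\theta_i$ (and makes the final union bound explicit), which is a slightly cleaner formulation of the same counting argument.
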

\begin{proof}
  Let $\bar{t}$ denote the first round that the number of free rounds
  equals the number of busy rounds starting from round
  $\hat{t}_v+1$. For the sake of contradiction, assume that
  $\bar{t} > \theta_i$, that is, the number of free rounds is less
  than the number of busy rounds in $\set{\hat{t}_v+1,\dots,\theta_i}$
  and we also know that the number of busy rounds is greater than the
  number of free rounds in $\set{\theta_{i-1}+1,\dots,\hat{t}_v}$
  (because of minimality of $\theta_i$ and the fact that round
  $\theta_{i-1}+1$ is busy). It follows that the number of busy rounds
  is greater than the number of free rounds in
  $\set{\theta_{i-1}+1,\dots,\theta_i}$ contradicting our assumption
  on the equality of free and busy rounds in
  $\set{\theta_{i-1}+1,\dots,\theta_i}$. Therefore,
  $\bar{t} \leq \theta_i$ and according to Lemma~\ref{le:isolated} the
  claim holds.
\end{proof}

As one can see in Figure \ref{fig:uninfcomp}, at the beginning of each
phase, the uninformed nodes in the backbone form one or several
connected subgraphs which we call the \textit{uninformed connected
  components}.  For each uninformed connected component there must
exist some edge in the backbone (within a phase) connecting an
informed node to a node in that component. Note that because the
adversary is $\toblivious$-oblivious and thus also oblivious to the
last $\tphase\leq \toblivious$ rounds, the adversary has to determine the
dynamic graph throughout a phase before the phase starts. The backbone
graph of a phase can therefore not change depending on the randomness
of the algorithm during the phase.

\begin{definition}[Available Components and Available Nodes]\label{de:availablenodes}
  At the end of the first $\ceil{\tphase/2}$ rounds of each phase, any
  uninformed connected component that includes at least one newly
  informed node is called an available component. All the nodes in an
  available component are called available nodes.
\end{definition}

\begin{figure}[t]
\centering
\includegraphics{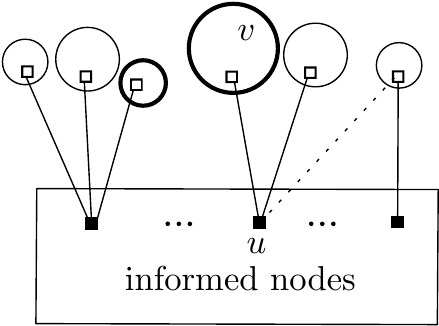}
\caption{Backbone of a phase. Available components are identified by thick circles.}
\label{fig:uninfcomp}
\end{figure}

\begin{lemma}\label{le:broadcastproceed}
  Consider an arbitrary phase and an arbitrary $i\geq 1$. If at the
  beginning of round
  $\theta_i$ of the phase there exists at least one uninformed
  available node, then w.h.p.\ at least one available node gets informed in
  round $\theta_i$.
\end{lemma}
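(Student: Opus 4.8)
The plan is to single out one \emph{frontier} transmitter whose isolation in round $\theta_i$ immediately informs an uninformed available node, and then to argue that under the hypothesis this isolation is forced to occur \emph{exactly} in round $\theta_i$.

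First I would produce a frontier edge. By assumption there is an uninformed available node $w$ at the start of round $\theta_i$; let $C$ be the available component containing it. By Definition~\ref{de:availablenodes}, $C$ contains at least one newly informed node, and by construction $C$ is connected in the backbone of the phase. Following a backbone path inside $C$ from a newly informed node to $w$, I locate a backbone edge $\set{v,w'}$ with $v$ newly informed and $w'$ uninformed at time $\theta_i$. Since the backbone persists throughout the phase, $\set{v,w'}$ is an edge of $G_{\theta_i}$, and since every node in $U(\theta_i)\cup I(\theta_i)$ is silent in the second half, $w'$ is listening in round $\theta_i$. Hence, if $v$ is \emph{isolated} (the unique transmitter in the whole network) in round $\theta_i$, then $w'$ receives $\mathcal{M}$ and an available node is informed in round $\theta_i$. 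It therefore suffices to prove that, w.h.p., $v$ is isolated in round $\theta_i$.

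The key deterministic step is that $v$ cannot have been isolated \emph{before} round $\theta_i$. The edge $\set{v,w'}$ is present in every round of the phase, and $w'$, being uninformed, is silent and listening throughout the second half; moreover $w'$ is still uninformed at the start of round $\theta_i$. Had $v$ been isolated in any round $t'\in\set{\hat{t}_v+1,\dots,\theta_i-1}$, then $w'$ would already have received $\mathcal{M}$, contradicting $w'\in U(\theta_i)$. Thus, on the event that an uninformed available node survives into round $\theta_i$, node $v$ is not isolated in any of the rounds $\hat{t}_v+1,\dots,\theta_i-1$. Combined with the next step, this pins any guaranteed isolation of $v$ within $\set{\hat{t}_v+1,\dots,\theta_i}$ down to round $\theta_i$ itself.

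It remains to show that w.h.p.\ $v$ \emph{is} isolated in some round of $\set{\hat{t}_v+1,\dots,\theta_i}$. When $\hat{t}_v\in\set{\theta_{i-1},\dots,\theta_i-1}$ and round $\theta_{i-1}+1$ is busy, this is exactly Lemma~\ref{le:isolatedintheta} (instantiated with $\mathcal{T}=\ceil{12\ln(n/\eps)}$ for a small inverse-polynomial $\eps$, so that Lemma~\ref{le:isolated} applies w.h.p.): the ballot-type structure behind the definition of $\theta_i$ as the \emph{first} balance time after $\theta_{i-1}$ forces at least half of the suffix $\set{\hat{t}_v+1,\dots,\theta_i}$ to be free. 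The hard part will be discharging the remaining cases for the chosen frontier node $v$. If $v$ was already newly informed in an earlier interval, i.e.\ $\hat{t}_v<\theta_{i-1}$, I would rule this out by the same mechanism applied to $v$'s own interval, which would have isolated $v$ and hence informed $w'$ strictly before round $\theta_i$, again contradicting $w'\in U(\theta_i)$. The genuinely delicate case is an interval that \emph{begins with a free round}, where the ballot argument no longer gives a free-majority suffix; there one must instead exploit the exact balance of the whole interval $\set{\theta_{i-1}+1,\dots,\theta_i}$ together with the fact that the relevant frontier node effectively starts transmitting at $\theta_{i-1}$. A union bound over the polynomially many candidate frontier nodes then keeps the total failure probability polynomially small, yielding the w.h.p.\ statement.
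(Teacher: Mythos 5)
Your frontier-edge construction and your pinning step (the frontier node $v$ cannot be isolated before round $\theta_i$ without informing $w'$, so any guaranteed isolation in $\set{\hat{t}_v+1,\dots,\theta_i}$ must land exactly on $\theta_i$) are precisely the mechanism of the paper's proof, and your explicit path argument inside the available component is in fact cleaner than the paper's corresponding assertion. However, there is a genuine gap, and you flag it yourself: you never establish the hypothesis of Lemma~\ref{le:isolatedintheta} --- that round $\theta_{j-1}+1$ is \emph{busy} --- for the intervals where you need it, and both of your fallback cases (the case $\hat{t}_v<\theta_{i-1}$, and the ``genuinely delicate case'' of an interval beginning with a free round) are left unresolved. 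The delicate case is not something to be handled by a refined ballot argument; it has to be ruled out, and nothing in your proposal does so.

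The missing idea is to use the lemma's own statement as an induction hypothesis over the balance times $\theta_0,\theta_1,\dots,\theta_{i-1}$. If an uninformed available node exists at the beginning of round $\theta_i$, then it was uninformed and available at the beginning of every $\theta_j$ with $j\leq i$; so by induction, w.h.p.\ some available node gets informed \emph{exactly} in round $\theta_j$ for every $j\leq i-1$ (and for $j=0$, some node in the component is newly informed in the first half of the phase, by Definition~\ref{de:availablenodes}). By the harmonic schedule \eqref{eq:harmonicalg}, a node $u'$ with $\hat{t}_{u'}=\theta_j$ transmits with probability $1$ in round $\theta_j+1$, hence $P(\theta_j+1)\geq 1$ and that round is busy. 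This discharges the busy-first-round hypothesis of Lemma~\ref{le:isolatedintheta} on \emph{every} interval $\left[\theta_j,\theta_{j+1}\right]$ with $0\leq j\leq i-1$, so w.h.p.\ every node $u$ with $\hat{t}_u\in\set{\theta_0,\dots,\theta_i-1}$ --- in particular your frontier node $v$, whatever its $\hat{t}_v$ --- is isolated in some round of $\set{\theta_0+1,\dots,\theta_i}$, and your pinning argument then closes the proof. In other words, the free-first-round case never arises (w.h.p.) while an uninformed available node is still alive, and the case $\hat{t}_v<\theta_{i-1}$ is absorbed into the same union over intervals instead of needing the separate, and itself circular, ``same mechanism applied to $v$'s own interval'' (which would again require the unproven busy condition for that earlier interval). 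Without this induction the argument does not go through: Lemma~\ref{le:isolatedintheta} gives nothing when its busy hypothesis fails, and your proposal supplies that hypothesis only for the single interval $\left[\theta_{i-1},\theta_i\right]$, and only by assumption.
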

\begin{proof} We will show that for every $i\geq 1$, w.h.p., if there
  is some node available $u$ with $\hat{t}_u = \theta_{i-1}$ and at
  the beginning of round $\theta_i$, there is at least one uninformed
  available node, then at least one available node $v$ gets informed
  in round $\theta_i$. The claim of the lemma then follows by
  induction on $i$. If there are no available nodes, there is nothing
  to prove. If there are available nodes, there is at least one node
  $u$ which gets newly informed in the first half of the phase and we
  therefore have $\hat{t}_u=\theta_0$. Using the above claim, it then
  w.h.p.\ follows that if there still is an uninformed available node
  at time $\theta_1-1$, some uninformed available node $u'$ gets
  informed in round $\theta_1$ and thus $\hat{t}_{u'} = \theta_1$. For
  $i>1$, the induction step now follows in the same way.  It therefore
  remains to show that w.h.p., if there is some node available $u$
  with $\hat{t}_u = \theta_{i-1}$ and at the beginning of round
  $\theta_i$, there is at least one uninformed available node, then at
  least one available node $v$ gets informed in round $\theta_i$.

  By Lemma \ref{le:isolatedintheta} we know that w.h.p., all the nodes
  $u$ with $\hat{t}_u\in \set{\theta_{i-1},\dots,\theta_{i}-1}$ get
  isolated in some round
  $t'\in \set{\theta_{i-1}+1,\dots,\theta_{i}}$. Hence, by induction
  on $j$, w.h.p., for all $j\leq i-1$ there is some node $u'$ with
  $\hat{t}_{u'}=\theta_j$ and therefore all nodes $u$ with
  $\hat{t}_u\in\set{\theta_0,\dots,\theta_i-1}$ get isolated in some
  round $t'\in \set{\theta_0+1,\dots,\theta_i}$. Consequently, w.h.p.,
  all newly informed nodes $N(\theta_i-1)$ at time $\theta_i-1$ get
  isolated in some round $t'\in \set{\theta_0+1,\dots,\theta_i}$. Let
  $v$ be an uninformed available node before round $\theta_i$ (i.e.,
  at time $\theta_i-1$). Because $v$ is available, at time
  $\theta_i-1$, there is a informed available neighbor $u$ in the
  backbone graph of the current phase. We clearly have
  $u\in N(\theta_i-1)$ and thus w.h.p., $u$ gets isolated in some
  round $t'\in \set{\theta_0+1,\dots,\theta_i}$. As soon as $u$ gets
  isolated, $v$ gets informed and we can therefore conclude that $u$
  gets isolated in round $\theta_i$ and thus $v$ gets informed in
  round $\theta_i$.
\end{proof}

\begin{lemma}\label{le:informednodesbounded} 
  Consider an arbitrary phase and assume that at the beginning of the
  second $\floor{\tphase/2}$ rounds of the phase there are $z$
  available nodes. Then, w.h.p., for some constant $c>0$, at the end
  of the phase we have at least $\min\set{z,c\tphase/ \ln^2 n}$ newly
  informed nodes.
\end{lemma}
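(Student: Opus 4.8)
The plan is to track $N$, the number of newly informed nodes at the end of the phase, and to distinguish two cases according to whether all $z$ available nodes become informed. First I would record a bound in the opposite direction, an upper bound on the number of \emph{busy} rounds in terms of $N$. Since a round $t$ is busy only if $P(t)\geq 1$, the number of busy rounds is at most $\sum_t P(t)=\sum_v\sum_t p_v(t)$. By \eqref{eq:harmonicalg} each newly informed node $v$ transmits with probability $1$ for $\mathcal{T}$ rounds, then $1/2$ for $\mathcal{T}$ rounds, and so on; as the second half has only $\floor{\tphase/2}$ rounds, $v$ passes through at most $O(\tphase/\mathcal{T})$ probability levels, so $\sum_t p_v(t)\leq \mathcal{T}\cdot O\big(\ln(\tphase/\mathcal{T})\big)=O(\ln^2 n)$, using $\mathcal{T}=\Theta(\ln n)$ and $\tphase\leq n/2$. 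Summing over the $N$ newly informed nodes, the number of busy rounds is at most $N\cdot O(\ln^2 n)$.

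In the first case every available node is informed by the end of the phase. By \cref{de:availablenodes} all $z$ available nodes were uninformed at the start of the phase, so each of them is then a newly informed node, giving $N\geq z\geq\min\set{z,\,c\tphase/\ln^2 n}$ and we are done.

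In the second case some available node is still uninformed at the end; since the set of uninformed available nodes only shrinks, there is an uninformed available node at the start of round $\theta_m$. Here I would show that the number of busy rounds is $\Omega(\tphase)$. If $m=0$, then by definition busy rounds strictly outnumber free rounds throughout the whole second half, so there are more than $\floor{\tphase/2}/2$ of them. If $m\geq 1$, then \cref{le:broadcastproceed} applied at round $\theta_m$ yields (w.h.p.) a node $w$ informed in round $\theta_m$, i.e.\ with $\hat t_w=\theta_m$; by \eqref{eq:harmonicalg} this node transmits with probability $1$ during the $\mathcal{T}$ rounds immediately after $\theta_m$, so $P(t)\geq 1$ and every round in $(\theta_m,\theta_m+\mathcal{T}]$ is busy. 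Hence the cumulative surplus of busy over free rounds is strictly positive just after $\theta_m$; were it to return to $0$ inside the phase, this would produce a further balanced interval and contradict the maximality of $\theta_m$. Therefore busy rounds strictly dominate free rounds on the tail $(\theta_m,j\tphase]$, while they are balanced on $[\theta_0,\theta_m]$, so again more than $\floor{\tphase/2}/2$ rounds are busy. Combining with the upper bound above forces $N\cdot O(\ln^2 n)\geq \Omega(\tphase)$, i.e.\ $N\geq c\tphase/\ln^2 n$ for a suitable constant $c$, which completes this case.

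Putting the two cases together gives $N\geq\min\set{z,\,c\tphase/\ln^2 n}$ w.h.p., as claimed. I expect the main obstacle to be the busy-round lower bound in the second case: the delicate point is ruling out a free-dominated tail after $\theta_m$, and the right lever is exactly that the node freshly informed at $\theta_m$ (guaranteed by \cref{le:broadcastproceed}) transmits with probability $1$ for $\mathcal{T}$ rounds, which together with the maximality of $\theta_m$ pins the tail to be busy-dominated. One must also ensure that all invocations of \cref{le:broadcastproceed} (and, through it, \cref{le:isolated}) over the at most $\tphase$ rounds of the phase hold simultaneously, which costs only a union bound given their high-probability guarantees.
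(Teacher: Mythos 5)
Your proof is correct and is essentially the paper's own argument in contrapositive form: both rest on the same two pillars, namely the bound that the number of busy rounds is at most $O(\ln^2 n)$ times the number of newly informed nodes (the paper's inequality \eqref{eq:probabilitysumupbound}), and the observation that an available node still uninformed at round $\theta_m$ yields, via \cref{le:broadcastproceed}, a node informed in round $\theta_m$ that then transmits with probability $1$, so round $\theta_m+1$ is busy and, by maximality of $\theta_m$, busy rounds dominate the second half. The only difference is the packaging: the paper splits on whether busy rounds outnumber free rounds (concluding $c\tphase/\ln^2 n$ newly informed nodes in the first case, and that all $z$ available nodes get informed in the second), whereas you split on whether all available nodes get informed — logically the same case analysis run in the opposite direction.
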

\begin{proof}
  We show that the claim holds through two general cases which could
happen within any phase, Fix some phase $j$.
  
\paragraph{Case 1} The number of busy rounds is not less than the
number of free rounds in the second $\floor{\tphase/2}$ rounds of
phase $j$.
  
\begin{claim}\label{cl:morebusy}
  If the number of busy rounds is not less than the number of free
  rounds in the second $\floor{\tphase/2}$ rounds of phase
  $j$, then the number of newly informed nodes at the end of the phase
  is at least $c\tphase / \ln^2 n$.
\end{claim}
\begin{proof}
  The sum of transmitting probabilities of all nodes in all rounds in
  the second $\floor{\tphase/2}$ rounds of phase $j$ is not smaller
  than $\floor{\tphase / 4}$, i.e.
  \begin{equation}\label{eq:probabilitysumlowbound}
    \sum\limits_{t=j\tphase - \floor{\frac{\tphase}{2}}+1}^{j\tphase} P(t) \geq   \floor{\frac{\tphase}{4}}.
  \end{equation} 
  Let $\newinformed$ denote the number of newly informed nodes at the
  end of the phase. We have
  \begin{equation}\label{eq:probabilitysumupbound}
    \sum\limits_{t=j\tphase - \floor{\frac{\tphase}{2}}+1}^{j\tphase} P(t) \stackrel{\eqref{eq:harmonicalg}}{\leq} \newinformed \mathcal{T} \cdot \sum\limits_{i=1}^{\ceil{\frac
        {\tphase}{2\mathcal{T}}}} \frac{1}{i} < \newinformed \mathcal{T} \cdot \left(\ln\ceil{\frac{\tphase}{2\mathcal{T}}}+1\right).
  \end{equation} 
  Then (\ref{eq:probabilitysumlowbound}) and
  (\ref{eq:probabilitysumupbound}) lead to
  \[
  \newinformed > \frac{\floor{\frac{\tphase}{4}}}{\mathcal{T} \cdot
    \left(\ln\ceil{\frac{\tphase}{2\mathcal{T}}}+1\right)}.
  \]
  There exists some constant $c > 0$ such that the right-hand side of
  the above inequality is not smaller than $c\tphase/\ln^2 n$ and thus the
  claim follows.
\end{proof}

\paragraph{Case 2} The number of busy rounds is less than the number
of free rounds in the second $\floor{\tphase/2}$ rounds of phase $j$.
  
In this case, round $\theta_m+1$ for $m \geq 0$ cannot be
busy. Otherwise the number of busy rounds is larger than the number of
free rounds in $[\theta_m+1,j\tphase]$ because $\theta_m$ is supposed
to be the last round of such intervals of second $\floor{\tphase/2}$
rounds as it is shown in Figure \ref{fig:thetas}. This implies that
the number of busy rounds is larger than the number of free rounds in
the second $\floor{\tphase/2}$ rounds of the phase which contradicts
the assumption of Case 2. Therefore, round $\theta_m+1$ for $m \geq 0$
must be free. For the sake of contradiction suppose that none of $z$
available nodes gets informed. Therefore, we can conclude that there
exists some available uninformed node in round $\theta_m$ and by Lemma
\ref{le:broadcastproceed} one node gets informed in this round which
leads to having the round $\theta_m+1$ busy that contradicts our
assumption on its freeness. As a result, all the $z$ available nodes
get informed.

Case 1 and Case 2 together conclude the claim.
\end{proof}


Using the established technical lemmas, we can now proof our upper
bound theorem.

\begin{theoremR}{thm:upperbound}
  Let $T\geq 1$, $\toblivious\geq 1$, and $k\geq 1$ be positive integer
  parameters. Assume that the adversary is $\toblivious$-oblivious. Then, in
  a dynamic $T$-interval $k$-connected $n$-node radio network, with
  high probability, single message broadcast can be solved in time
  \[
  O\left(\left(1+\frac{n}{k\cdot \min\set{\toblivious,T}}\right)\cdot n\log^3 n\right).
  \]
\end{theoremR}
\begin{proof}
  Consider some phase $j$ and let $\mathcal{B}_j$ be the backbone of
  phase $j$, i.e., $\mathcal{B}_j$ is the stable $k$-connected
  subgraph of phase $j$.  Consider the subgraph $\mathcal{B}_j[U_j]$
  of $\mathcal{B}_j$ induced by the uninformed nodes $U_j$. This
  induced subgraph might consist of several connected
  components. However, each of the components is connected to at least
  $k$ nodes in $I_j$ as it is shown in Figure \ref{fig:uninfcomp} (recall that we can assume that $|I(t)|\geq
  k+1$). Note that if one of these at least $k$ nodes gets isolated in
  the first half of the phase, all nodes in the component become
  available for the second half of the phase.
  
  In the first half of phase $j$, in any round $t$, each node in
  $I(t)$ transmits the message with probability $1/n$. Therefore,
  for every node $u\in I(t)$, the probability that $u$ gets isolated
  in round $t+1$ (in the first half of a phase) is at least
  
  \begin{equation}
    \label{eq:nodeisolation}
    \Pr(u\text{ gets isolated in round }t+1) \geq
    \frac{1}{n}\left(1-\frac{1}{n}\right)^{n-1} > \frac{1}{en}.
  \end{equation}
  
  In following, we analyze the progress in the first half of some
  phase $j$. Consider an uninformed node $v\in U_j$ (at the beginning
  of phase $j$). Let $\mathcal{A}_v$ be the event that $v$ becomes
  available in the first half of phase $j$. Event $\mathcal{A}_v$
  definitely occurs if one of the at least $k$ initially informed
  neighbors of $v$'s component in $\mathcal{B}_j[U_j]$ gets isolated
  in one of the at least $\tphase/2$ rounds of the first half of the
  phase. The probability for this is
  \[
  \Pr(\mathcal{A}_v) \geq 1 - \left(1-\frac{k}{en}\right)^{\tphase/2}
  > 1 - e^{-k\tphase/2en} \geq \frac{k\tphase}{4en}.
  \]
  The last inequality follows from the fact that for all $0\leq x\leq 1$,
  $e^{-x}\leq 1-x/2$. Let $X$ be the number of nodes in $U_j$ that get
  available in phase $j$. For convenience, we define $\lambda:=
  |U_j|/n \leq 1$. We then have
  \begin{equation}
    \label{eq:availableexpect}
    \E[X] = \sum_{v\in U_j} \Pr(\mathcal{A}_v) \geq |U_j|\cdot
    \frac{k\tphase}{4en} = \frac{\lambda k \tphase}{4e}.
  \end{equation}
  We define $F:= \min\big\{\frac{\tphase}{16e},\frac{c\tphase}{\ln^2
      n}\big\}$, where $c>0$ is the constant that is used in Lemma
  \ref{le:informednodesbounded}.  Note that by Lemma
  \ref{le:informednodesbounded}, in phase $j$, w.h.p., at least
  $\min\{X,F\}$ uninformed nodes become informed.

  We define a phase to be successful if $X \geq \lambda F$. Let
  $\mathcal{S}$ be the event that phase $j$ is successful and let
  $\bar{\mathcal{S}}$ be the complementary event. We can upper bound
  the expected value of $X$ as follows:
  \[
  \E[X] < \Pr(\bar{\mathcal{S}})\cdot \lambda F +
  \big(1- \Pr(\bar{\mathcal{S}})\big)\cdot \lambda n.
  \]
  Combining with the upper bound in \eqref{eq:availableexpect},
  we obtain (recall that we assume that $k\tphase \leq n/2$).
  \begin{equation}
    \label{eq:nosuccess}
    \Pr(\bar{\mathcal{S}}) < 
    \frac{n - \frac{k\tphase}{4e}}{n - F} \leq
    \frac{n - \frac{k\tphase}{4e}}{n - \frac{\tphase}{16e}} 
    \leq
    \left(1-\frac{k\tphase}{4en}\right)\left(1+\frac{\tphase}{8en}\right)
    \stackrel{(k\geq 1)}{\leq}
    \left(1-\frac{k\tphase}{8en}\right).
  \end{equation}
  
  By Lemma \ref{le:informednodesbounded}, in a successful phase,
  w.h.p., at least $\lambda F$ new nodes get informed. Hence, in a
  successful phase, w.h.p., we get rid of at least an $(F/n)$-fraction
  of the remaining uninformed nodes. In order to inform all nodes,
  w.h.p., we therefore need at most
  $O(n\log(n)/F)=O(n\log^3(n)/\tphase)$ successful phases. Using
  \eqref{eq:nosuccess} and a standard Chernoff argument, we can thus
  w.h.p. upper bound the total number of phases by
  $O\big(n^2\log^3(n)/(k\tphase^2)\big)$. As each phase takes
  $\tphase$ round, this concludes the proof.
\end{proof}


\section{Lower Bound}
\label{sec:lower}

In this section we prove a lower bound for global broadcast in
$T$-interval $k$-connected radio networks against a $1$-oblivious
adversary.  Furthermore, we show impossibility of solving the same
problem against a strongly adaptive adversary ($0$-oblivious
adversary).

Our lower bound is based on a general technique for proving lower
bounds for communication problems in radio networks, introduced by
Newport in \cite{newport-hitting}.  Using this technique, one first
defines a combinatorial game for which a lower bound can be proved
directly. It is then shown how to reduce the game to the problem in
order to leverage the game's lower bound to obtain the desired lower
bound for the problem.

To prove \Cref{thm:lowerbound} using this technique, we first
introduce an abstract hitting game, called the
\textit{$(\beta,\ell,\varphi)$-periodic hitting game} and directly
prove a lower bound for winning this game. We note that this game is
more involved than the games used in previous work,e.g.,
\cite{obliviousDG,newport-hitting}. Based on a lower bound for the
hitting game, for a given instance of the game we instantiate an
$n$-node \emph{target network}. By instantiation of an $n$-node network, we
mean assigning $n$ processes with unique IDs to the nodes of the
network.  For the instantiation, one needs to also have information
which is not available to the player in the game. However, we show
that by playing the game, the player can still simulate the execution
of a given broadcast algorithm on the corresponding target network to
the given instance of the game.  We show that this simulation of a
broadcast algorithm allows to win the hitting game and the lower bound
on the hitting game and the simulation together imply a lower bound
for solving global broadcast.

\paragraph{\boldmath$(\beta,\ell,\varphi)$-periodic hitting game}
The game is defined for three integers $\beta ,\ell,\varphi >0$ and
proceeds in rounds.  Time is divided into phases of $\varphi$ rounds,
where the $j^{th}$ phase of the game is called phase $\pi_j$.  That
is, phase $\pi_1$ consists of rounds $1$ to $\varphi$, phase $\pi_2$
consists of rounds $\varphi +1$ to $2\varphi$, etc.  The first round
of any phase $\pi_j$ is called $t_j$.  The player of the game is
represented by a probabilistic automaton $\mathcal{P}$ and plays the
game against a referee.  Two sets are defined for this game, a
\textit{selection set} $S :=[\beta]$ which is fixed during the game,
and a \textit{target set} which might change from round to round. The
target set of round $t$ is denoted by $X(t)$.  In each round $t$,
$\mathcal{P}$ chooses one element from $S$ and outputs it as the guess
$\gamma(t)$ for round $t$. Round $t$ is called a \textit{successful
  round}, if and only if $\gamma (t) \in X(t)$.
		
At the beginning of each phase $\pi_j$ ($j\geq 1$), the referee
determines a set $Y_j$ consisting of $\ell$ elements chosen uniformly
at random from $S$.  We define the target set as follows. For
convenience, assume that $Y_0$, $Y_{-1}$ and $X(0)$ are empty sets.
\begin{equation*}
  \forall j, \forall t\in \left[t_j,t_j+\varphi-1\right]\,:\,
  X(t) :=
  \begin{cases}
    Y_j \cup \big[X(t-1) \setminus (\{\gamma(t-1)\}\cup
    (Y_{j-2}\setminus Y_{j-1}))\big] &
    \text{if }{t=t_j},\\
    X(t-1)\setminus \{\gamma(t-1)\} & \text{if }{t\neq t_j}.
  \end{cases}
\end{equation*}  

\noindent
That is, at the beginning of each phase $\pi_j$, the referee chooses
$\ell$ elements $Y_j$ from $S$ uniformly at random and adds them to
the target set.  Two phases ($2\varphi$ rounds) later, each of these
$\ell$ elements which still remains in the target set (and which is
not in $Y_{j+1}$) is removed from the target set by the referee.
Moreover, after each successful round, the referee removes the correct
guess from the target set.  Player $\mathcal{P}$ wins the game in $r$
rounds if and only if either round $r$ is the $\beta^{th}$ successful round
for $\mathcal{P}$, or before round $r+1$ (in phase $j$),
$X(r+1)\cap Y_{j-1}=\emptyset$ or $X(r+1)\cap Y_j=\emptyset$. The
second condition will be used to ensure sufficiently large interval
connectivity of the target network as long as the game is not won. The
only information that the player receives at the end of each round is
whether the round was successful or not. The player is also notified
if it wins the game.
		
Intuitively, as long as the target set changes sufficiently often, it
should always appear essentially random to the player. Therefore, the
best strategy for hitting the target set is to always choose an almost
uniformly random guess, leading to roughly $\beta/\ell$ rounds to get
a single successful round.  The following lemma states this intuition
formally. 
\begin{lemma}\label{le:gamelb}
  For any $\varphi \leq \beta/3$ and for $\ell\geq \ell_0$ for a
  sufficiently large constant $\ell_0>0$, the expected number of
  rounds for a player to win the ($\beta,\ell,\varphi$)-periodic
  hitting game is at least $\Omega(\beta^2/\ell)$.
\end{lemma}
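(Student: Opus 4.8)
The plan is to show that, whatever the player's strategy, each round is successful with probability only $O(\ell/\beta)$ conditioned on the entire feedback history, and then to argue that neither of the two winning conditions can be met in fewer than $\Omega(\beta^2/\ell)$ rounds. The starting point is a structural fact about the target set: during any phase $\pi_j$, every element of $X(t)$ lies in $Y_{j-1}\cup Y_j$. This follows by induction on the phase index from the update rule, since an element injected in some phase $\pi_i$ with $i\le j-2$ survives the removal of $Y_{i}\setminus Y_{i+1}$ at the start of phase $\pi_{i+2}$ only if it also belongs to $Y_{i+1}$, so by the time we reach $\pi_j$ it has been ``re-certified'' into $Y_{j-1}$. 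In particular $X(t_j)=Y_j\cup\big(X(t_j)\cap Y_{j-1}\big)$, and within the phase $X$ only loses correctly guessed elements, so a successful round in $\pi_j$ must hit $Y_{j-1}\cup Y_j$, a set of at most $2\ell$ random elements of $S=[\beta]$.

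For the per-round bound I would fix a round $t$ of phase $\pi_j$ and upper bound $\Pr[\gamma(t)\in X(t)\mid \mathcal F_{t-1}]$, where $\mathcal F_{t-1}$ is the $\sigma$-field generated by the player's coins and the success/failure bits of rounds $1,\dots,t-1$. Splitting via $X(t)\subseteq Y_{j-1}\cup Y_j$, it suffices to bound $\Pr[\gamma(t)\in Y_j\mid\mathcal F_{t-1}]$ and $\Pr[\gamma(t)\in Y_{j-1}\mid\mathcal F_{t-1}]$ separately. The feedback entangles the two batches (a ``success'' on $x$ only reveals $x\in Y_{j-1}\cup Y_j$), so to decouple them I would reveal to the player all of the referee's randomness \emph{except} the single batch being bounded; this only strengthens the player and hence only increases the probability I am bounding. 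After this reveal, the task reduces to adaptively searching for one uniformly random $\ell$-subset via membership queries, and each guess of a previously unqueried element lies in the hidden batch with probability at most $\ell/(\beta-O(\varphi)-\ell)=O(\ell/\beta)$, using $\varphi\le\beta/3$ and $\ell\ll\beta$ in the relevant regime. Summing the two contributions gives $\Pr[\text{round }t\text{ successful}\mid\mathcal F_{t-1}]\le p$ with $p=O(\ell/\beta)$.

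With the per-round bound in hand, set $M:=c'\beta^2/\ell$ for a small constant $c'$ and bound the probability of winning within $M$ rounds. For the first winning condition, the number of successful rounds $\Sigma_M$ satisfies $\E[\Sigma_M]=\sum_{t\le M}\Pr[\text{success at }t]\le pM=O(c'\beta)$, so by Markov's inequality $\Pr[\Sigma_M\ge\beta]\le 1/4$ once $c'$ is small enough. For the second winning condition, emptying $Y_j$ within its own phase ($\varphi$ rounds) or emptying $Y_{j-1}$ over its lifetime ($2\varphi$ rounds) forces the player's guesses to cover all $\ell$ elements of a uniformly random $\ell$-subset using at most $2\varphi\le 2\beta/3$ distinct probes. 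By a hidden-set covering bound (provable by an exchangeability argument, so that adaptivity does not help) this has probability at most $\binom{2\varphi}{\ell}/\binom{\beta}{\ell}\le(2\varphi/\beta)^\ell\le(2/3)^\ell$ per phase; noting that emptying is only possible when $2\varphi\ge\ell$, a union bound over the at most $M/\varphi$ phases of the window yields total probability at most $(M/\varphi)\cdot 2(2/3)^\ell$. This is the delicate step, and it drops below $1/4$ once $\ell$ is taken sufficiently large (this is where $\ell\ge\ell_0$ is used). Combining, $\Pr[\text{win within }M]\le 1/2$, so $\E[R]\ge M/2=\Omega(\beta^2/\ell)$.

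The main obstacle is precisely the second winning condition: it allows the player to win after only $\ell$ (rather than $\beta$) successful rounds, so a plain success-counting argument yields only the weak bound $\Omega(\beta)$. The resolution is that those $\ell$ successes must be compressed into a window of at most $2\varphi$ rounds and must cover an entire freshly chosen random batch; the hypothesis $\varphi\le\beta/3$ is exactly what keeps the expected number of batch-hits in such a window strictly below $\ell$, turning ``emptying a batch'' into a large-deviation covering event, while taking $\ell$ large makes the per-phase probability $(2/3)^\ell$ small enough to survive the union bound over the $\Theta(\beta^2/\ell^2)$ phases of the window. The secondary difficulty is the adaptive conditioning in the per-round bound, handled by the reveal-extra-randomness device above.
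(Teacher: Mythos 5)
Your overall strategy is essentially the paper's proof: a per-round success bound of $O(\ell/\beta)$ conditioned on the full feedback history, a Markov/waiting-time argument for the ``$\beta$ successes'' winning condition, and a per-phase covering bound plus a union bound for the ``batch-emptying'' winning condition. The structural fact $X(t)\subseteq Y_{j-1}\cup Y_j$, and your reveal-extra-randomness device, correspond directly to the paper's steps (the paper reveals whether each success came from $Y_{j-1}$ or $Y_j$ and then runs a three-case analysis over $\Gamma_j(t)$, $\Gamma_{j-1}\setminus\Gamma_j(t)$, and the rest, obtaining the same $2\ell/(\beta-2\varphi+1)$ per-round bound).

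The gap is exactly at the step you yourself flag as delicate, and as written it is fatal. Your union bound gives failure probability $(M/\varphi)\cdot 2(2\varphi/\beta)^\ell$; with $M=c'\beta^2/\ell$ and $\varphi$ allowed to be as large as $\beta/3$, this is $\Theta\bigl((\beta^2/(\ell\varphi))\cdot(2/3)^\ell\bigr)$, which for any \emph{constant} $\ell$ exceeds $1$ once $\beta$ is large; driving it below $1/4$ forces $\ell=\Omega(\log\beta)$, contradicting the hypothesis that $\ell$ is merely a sufficiently large constant. Moreover, this is not a repairable presentation issue, because under the literal hypotheses ($\varphi\le\beta/3$, $\ell$ a constant) the claimed bound is false: a player who simply queries $2\varphi=2\beta/3$ distinct elements over the lifetime of each batch covers that batch with probability roughly $(2/3)^\ell$, a positive constant, and hence wins via the second condition within expected $O\bigl((3/2)^\ell\,\beta\bigr)=o(\beta^2/\ell)$ rounds. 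What actually rescues the lemma is an assumption the paper's proof introduces silently: it assumes $\varphi<\beta^{1-\eps}$ for a constant $\eps>0$ (consistent with the only place the lemma is used, Lemma~\ref{le:reduction}, where $\varphi=cT\ln n$ and $T\le (n/k)^{1-\eps}$). Under that hypothesis the per-phase covering probability becomes $(8e\varphi/\beta)^\ell\le(8e/\beta^{\eps})^{\ell}$, i.e., \emph{polynomially} small in $\beta$, so a constant $\ell_0=O(1/\eps)$ suffices to beat the union bound over the polynomially many phases. So to complete your argument you must import that stronger hypothesis on $\varphi$ (as the paper's proof implicitly does), or else allow $\ell$ to grow like $\log\beta$; with only $\varphi\le\beta/3$ and constant $\ell$, no proof can exist.
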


\begin{proof}
From the game's definition, we can conclude that for winning the game, it is necessary for the player to either experiences $\beta$ successful rounds or there exist two consecutive phases ($2\varphi$ rounds) consisting at least $\ell$ successful rounds. We first show that for sufficiently large constant $\ell$, the latter case does not happen with high probability. Then, we show that the expected number of rounds for the player to experience $\beta$ successful rounds is at least $\Omega (\beta^2/\ell)$.

Fix some two consecutive phases ($2\varphi$ rounds). We assume that $\varphi < \beta ^{1-\eps}$ for some constant $\eps > 0$. Let random variable $Z$ be the number of successful rounds in these $2\varphi$ rounds. Then $Z$ is dominated by a binomial random variable with parameters $2\varphi$ and $2\ell/(\beta -2\varphi +1)$. Then, 
\[
\Pr\left(Z\geq \ell \right) \leq 
	\binom{2\varphi}{\ell}\cdot \left( \frac{2\ell}{\beta - 2\varphi +1}\right) ^\ell \leq		
		\left( \frac{2e\varphi}{\ell} \right) ^\ell \cdot \left( \frac{2\ell}{\beta - 2\varphi +1}\right) ^\ell \leq
			\left( \frac{8e\varphi}{\beta} \right)^\ell \leq
				\left( \frac{8e}{\beta ^\eps} \right)^\ell .
\]
It is, therefore, sufficient to choose $\ell = O(1/\eps) = O(1)$ so that by applying union bound over all $O(\beta^2)$ such periods of $2\varphi $ rounds, we guarantee that with high probability there do not exist two consecutive phases consisting at least $\ell$ successful rounds throughout the whole execution. 

Now let us calculate the expected number rounds for the player to experience $\beta$ successful rounds. In any round $t$ we have at most $2\ell$ elements in the target set
and hence $|X(t)|\leq 2\ell$. Let $\Gamma_j$ be the set of all
proposals in phase $\pi_j$:
\begin{equation*}
  \Gamma_j = \bigcup \limits_{i = t_j}^{t_j+\varphi -1} \{\gamma(i)\}
\end{equation*}
and $\Gamma_j(t)$ the set of all proposals in phase $\pi_j$ until
(including) round $t$:
\begin{equation*}
  \Gamma_j(t) = \bigcup\limits_{i = t_j}^{t} \{\gamma(i)\}
\end{equation*}
As mentioned above, the only information that the player receives in
each round is whether its proposal was successful or not. While this
is little information, the game history until round $t$ can still help
the player to offer a better proposal for the next round. In
particular, at the end of round $t$ in phase $\pi_j$, the player knows
that $|\Gamma_j(t)\cap X(t+1)|=0$ because the adversary removes any
correct proposal from the target set. Therefore, the player should
not output any proposal which is in $\Gamma_j(t)$. On the other
hand, in round $t$ of phase $\pi_j$, any element in
$\Gamma_{j-1}\setminus \Gamma_j(t)$ is in $X(t+1)$ if and only if it
is in $Y_j$. For every round $t$ in phase $\pi_j$, we partition the
set $S$ into three disjoint sets: $\Gamma_j(t)$,
$\Gamma_{j-1}\setminus \Gamma_j(t)$, and
$S\setminus (\Gamma_{j-1}\cup \Gamma_j(t))$. Based on its knowledge,
the player can then deduce the following statements with respect to
the game history:
\begin{align*}
  \forall x\in \Gamma_j(t) : &\   x\not\in X(t+1), \\
  \forall x\in \Gamma_{j-1}\setminus \Gamma_j(t) : 
                             &\
                               x\in X(t+1) \Longleftrightarrow  x\in Y_j,\\
  \forall x\in S\setminus (\Gamma_{j-1}\cup \Gamma_j(t)) : 
                             &\
                               x\in X(t+1)\Longleftrightarrow x\in (Y_{j-1} \cup Y_j).
\end{align*}
Based on this information we can calculate the probability of success
for choosing a proposal from any of these three sets separately.

Fix some round $t$ in phase $\pi_j$ and let $h_j(t)$ be the number of
elements in $Y_j$ that are proposed correctly by the player until
round $t$ (in other words, $h_j(t)$ is the number of elements in $Y_j$
which are hit in some rounds in $[t_j,t]$). We will assume that the
adversary informs the player in each successful round whether the
proposal is from $Y_{j-1}$ or $Y_j$. As a player is free to ignore
this information, it only makes the adversary weaker (and the player
stronger). Any lower bound which holds with this additional
information therefore also holds without it. Knowing whether a
successful proposal in phase $\pi_j$ is from $Y_{j-1}$ or $Y_j$ lets
the player know $h_j(t)$ for all $t$ and $j$. Assume that
$x =\gamma(t+1)$ is the proposal chosen by the player for the round
$t+1$.

\paragraph{Case 1, \boldmath$x\in \Gamma_j(t)$}
In this case, the proposal of round $t+1$ is one of the earlier phase
$\pi_j$ proposals and it therefore cannot be in the target set. Thus,
we have
\[
\Pr\big(x\in X(t+1)|x\in \Gamma_j(t)\big) = 0.
\]
  
\paragraph{Case 2, \boldmath$x\in \Gamma_{j-1}\setminus \Gamma_j(t)$}
In this case the proposal is in the target set if and only if
$x\in Y_j$:
\[
\Pr\big(x\in X(t+1)\, |\, x\in \Gamma_{j-1}\setminus \Gamma_j(t)\big)
= 
\frac{|Y_j|-h_j(t)}{|S|-|\Gamma_j(t)|} \leq \frac{|Y_j|}{\beta
  -\varphi +1} \leq 
\frac{\ell}{\beta -\varphi +1} \leq \frac{2\ell}{\beta -
  2\varphi +1}.
\]

\paragraph{Case 3, \boldmath$x\in S\setminus (\Gamma_{j-1}\cup \Gamma_j(t))$}
Any element in $S\setminus (\Gamma_{j-1}\cup \Gamma_j(t))$ is in
$X(t+1)$ if and only if it is in $Y_j$ or in $Y_{j-1}$. Therefore, we
can estimate the probability of success for choosing any element in
$S\setminus (\Gamma_{j-1}\cup \Gamma_j(t))$ as the proposal for round
$t+1$ as follows:
\begin{eqnarray*}
  \Pr\big(x\in X(t+1)\, |\, x\in S\setminus \left(\Gamma_{j-1}\cup
  \Gamma_j(t)\right)\big) 
  & \leq & \frac{|Y_{j-1}\cup Y_j|-|\Gamma_{j-1}\cup \Gamma_j(t)|}{|S|-|\Gamma_{j-1}|-|\Gamma_{j}(t)|}\\ 
  & \leq & \frac{|Y_j|+|Y_{j-1}|}{\beta -2\varphi +1} \\
  & \leq & \frac{2\ell}{\beta -2\varphi +1}.
\end{eqnarray*}
From these three cases we can conclude that even when using all the
available information about the history, the probability of success
for any element from the selection set is at most
$\frac{2\ell}{\beta -2\varphi +1}$. Let $\bar{\mathcal{H}}$ be the
game history and let $\bar{x}$ be an arbitrary element from $S$. We
then have
\begin{equation}\label{equ:hitting-prob}	
  \Pr\left(\bar{x}\in X(t+1)\,|\, \bar{\mathcal{H}}\right) \leq \frac{2\ell}{\beta - 2\varphi +1}.
\end{equation}
Starting at any time $t$, let the random variable $R$ count the number
of rounds until the first successful round after round $t$. We have
\[
\E[R_1] \geq \frac{\beta -2\varphi +1}{2\ell}
\stackrel{(\varphi\leq\beta/3)}{\geq}
\frac{\frac{1}{3}\beta + 1}{2\ell}.
\]
Consequently, the expected time until having $\beta$ successful rounds
becomes $\Omega(\beta^2/\ell)$.
\end{proof}

\begin{lemma}\label{le:reduction}
  If algorithm $\mathcal{A}$ solves the global broadcast problem in
  any $T$-interval $1$-connected dynamic $n$-nodes network against a
  $1$-oblivious adversary in $f(n)=n^{O(1)}$ rounds in expectation for
  a sufficiently large value of $T$, then we can construct a player
  $\mathcal{P}$ to win the
  $(\lfloor n/2\rfloor-\ell, \ell , cT\ln n)$-periodic hitting game
  in expected $O(f(n)\log n)$ rounds, for some positive constants $c$
  and $\ell$.
\end{lemma}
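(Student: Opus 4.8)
The plan is to convert a broadcast algorithm $\mathcal{A}$ into a player $\mathcal{P}$ for the periodic hitting game, so that the $\Omega(\beta^2/\ell)$ bound of \Cref{le:gamelb} is inherited by $\mathcal{A}$. First I would build a \emph{target network} on $n$ nodes out of the game's data. I identify the selection set $S=[\beta]$, with $\beta=\lfloor n/2\rfloor-\ell$, with a pool of relay nodes that are declared informed at the outset (reached from the source in round $1$), and let the remaining $\approx n/2$ nodes be the uninformed nodes that must still be reached. The uninformed region is attached to the relay pool so that its only \emph{stable} informed neighbours are the relays indexed by the current target set $X(t)$; concretely, the referee's random choice $Y_j$ at the start of phase $\pi_j$ is read as a $1$-oblivious adversary fixing $\ell$ fresh stable edges from the relays to the uninformed region for that phase. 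Since $Y_j$ is committed at the beginning of the phase, before any coin of the current round is revealed, this is a legitimate $1$-oblivious adversary, and the referee's two-phase retention of each $Y_j$ is exactly what will supply interval connectivity.

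Next I would have $\mathcal{P}$ simulate $\mathcal{A}$ round by round, and here the key structural observation is that uninformed nodes stay silent in the broadcast model. Hence the set $W(t)$ of transmitters in any round is determined solely by the roster of \emph{currently informed} nodes together with their internal randomness, both of which $\mathcal{P}$ tracks and generates itself; in particular $\mathcal{P}$ can compute $W(t)$ without knowing the hidden sets $Y_j$. The only thing $\mathcal{P}$ does not know is which uninformed node (if any) actually \emph{receives} in round $t$, i.e.\ which node becomes newly informed and must be added to the transmitter roster for subsequent rounds. An uninformed node receives exactly when precisely one of its stable relay-neighbours — one element of the current $X(t)$ — lies in $W(t)$, and it is precisely this event that $\mathcal{P}$ reads off from the game: a guess $\gamma(t)$ that hits the target set is declared to be the relay through which some node gets informed, that node is promoted into the informed set, and the simulation proceeds. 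A successful game round is thus coupled to informing one node, so the $\beta$-th success coincides with $\mathcal{A}$ completing the broadcast of all $\approx n/2$ uninformed nodes.

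The round accounting and interval connectivity come together as follows. I would resolve each simulated broadcast round using at most $O(\log n)$ game rounds — during which the network is held frozen — so that $T$ consecutive broadcast rounds map to a single game phase of $\Theta(T\log n)$ rounds, which is exactly why the phase length is set to $\varphi=cT\ln n$. For connectivity I argue the contrapositive against the game's \emph{second} winning condition: as long as $\mathcal{P}$ has not already won, $X(t)\cap Y_{j-1}\neq\emptyset$ and $X(t)\cap Y_j\neq\emptyset$, so the uninformed region always retains at least one stable relay-neighbour; combined with the two-phase persistence of each $Y_j$ and the slack $\varphi\gg T$, this guarantees a common stable neighbour over every window of $T$ consecutive rounds, so the simulated graph is genuinely $T$-interval $1$-connected right up until the player wins (and we never need to realise a disconnected graph). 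The hypotheses $f(n)=n^{O(1)}$ and ``$T$ sufficiently large'' enter here, keeping the number of phases polynomial so that the union bounds and the ``no two consecutive phases with $\ell$ hits'' event of \Cref{le:gamelb} stay valid, and leaving room in $\varphi$ for the construction. Since $\mathcal{A}$ halts in $f(n)$ expected rounds and each is simulated with $O(\log n)$ game rounds, the game is won in $O(f(n)\log n)$ expected rounds.

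The main obstacle is the detection step. The game exposes only single-element guesses answered by one feedback bit, and each guessed element is deleted from the target set, whereas ``exactly one neighbour transmits'' is, in a round with $|W(t)|\ge 2$, a count-type query over a transmitter set that may be as large as $n/2$. In a singleton round $W(t)=\set{u}$ the reduction is immediate ($\mathcal{P}$ guesses $u$ and the game reports whether $u\in X(t)$), so the heart of the argument is to implement the detection in the multi-transmitter case at an amortised cost of only $O(\log n)$ guesses per broadcast round. The plan is to exploit that $X(t)$ contains at most $2\ell=O(1)$ relevant elements, that an unsuccessful probe leaves the target set untouched (so probes are essentially free unless they are the very hits we wish to count), and that a short search over $W(t)$ suffices to locate an informing relay. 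I also have to verify that the coupling corresponds to a bona fide execution of $\mathcal{A}$ against a $1$-oblivious adversary, so that the lower bound of \Cref{le:gamelb} genuinely transfers to $f(n)$; this is where the freezing of the network during probing and the commitment of $Y_j$ at each phase boundary must be checked to be mutually consistent.
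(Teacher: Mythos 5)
Your setup --- identifying the selection set with a pool of pre-informed relay nodes, attaching the uninformed region through bridge edges indexed by the current target set, reading the referee's choice of $Y_j$ as a $1$-oblivious adversary committing edges at phase boundaries, and arguing $T$-interval connectivity contrapositively via the game's second winning condition --- matches the paper's construction. But the step you yourself flag as ``the main obstacle'' is exactly the heart of the lemma, and it remains unsolved in your proposal. Your fallback plan cannot work as stated: the game answers only one single-element membership query per round, and every probe consumes a game round, so there is no way to ``search'' a transmitter set $W(t)$ of size up to $n/2$ within the $O(\log n)$-guesses-per-round budget; binary search is impossible because an unanswered element yields no information, and probes are not free in the only currency that matters (game rounds).

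The paper's missing idea is to use the $1$-obliviousness in a second, stronger way: since the transmission probabilities $p_v(r)$ of round $r$ are part of the nodes' states at the start of the round (they depend only on randomness up to round $r-1$), the simulated adversary may choose the \emph{graph} of round $r$ as a function of these probabilities. Call round $r$ \emph{busy} if $\sum_{v\in I}p_v(r)>\frac{c}{2}\ln n$ and \emph{free} otherwise. In a busy round the adversary presents the complete graph $K_n$: by a Chernoff argument, w.h.p.\ at least two nodes transmit, every listener hears a collision, no progress is possible, and --- crucially --- the player spends \emph{zero} guesses. In a free round, w.h.p.\ at most $c\ln n$ nodes transmit, and the graph is the backbone in which each bridge node $x\in X(r)$ is matched to exactly \emph{one} uninformed partner $n(x)$ (and each uninformed node to at most one bridge); hence $n(x)$ receives if and only if $x$ transmits, with no collision issue, and detection degenerates to one membership query per transmitter, i.e.\ at most $c\ln n$ guesses. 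Executions violating these w.h.p.\ events are declared bad and the player falls back to random guessing, which costs only $O(\beta^2/\ell)$ expected rounds and is absorbed into the bound. Without this busy/free dichotomy (and without the one-to-one bridge matching, which your write-up leaves ambiguous by allowing an uninformed node several relay neighbours), the reduction cannot be completed along the lines you sketch.
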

\begin{proof}
  We construct a player $\mathcal{P}$ to simulate the execution of
  $\mathcal{A}$ on a particular $T$-interval $1$-connected dynamic
  $n$-node network (the target network). Then the player uses the transmitting behavior of
  the nodes in the simulation to generate guesses for playing the
  game. We start by defining the target network for a given instance
  of the $(\lfloor n/2\rfloor-\ell,\ell, cT\ln n)$-periodic hitting
  game.

\begin{figure}[t]
\centering
\begin{subfigure}{.4\textwidth}
  \centering
  \includegraphics[width=.9\textwidth]{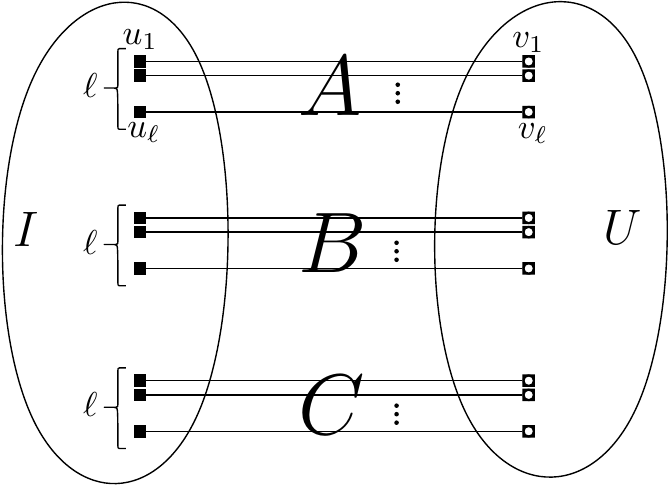}
  \caption{The core structure of the dynamic lower bound
    network. The edges labeled $A$, $B$, and $C$ are added in
    different phases. \label{fig:a}}
  \label{fig:sub1}
\end{subfigure}%
\hspace{1cm}
\begin{subfigure}{.4\textwidth}
  \centering
  \vspace{2cm}
  \includegraphics[width=.9\textwidth]{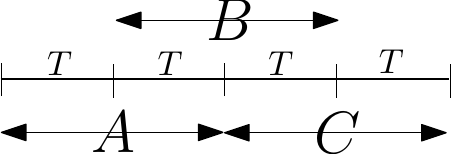}
  \caption{At least one edge exists between $I$ and $U$ for
    any $T$ consecutive rounds. \label{fig:b}}
  \label{fig:sub2}
\end{subfigure}
\caption{A snapshot of the dynamic network used in the hitting
      game simulation.}
\label{fig:lowerbound}
\end{figure}

  \paragraph{The Target Network}
  For the following discussion, we set
  $\beta:=\lfloor n/2\rfloor -\ell$ and $\varphi:=cT\log n$ to
  denote the size of the selection set and the length of a phase of
  the hitting game. We assume that we are given an instance of the
  $(\beta,\ell,\varphi)$-periodic hitting game. Based on how the
  hitting game develops, we define an $n$-node dynamic target
  network. We first describe the core (backbone) part of the
  network. The nodes of the dynamic network are defined as
  $V:=\set{0,\dots,n-1}$. We assume that node $0$ is the source and we
  identify the next $\beta$ nodes (i.e., the set $[\beta]$) with the
  selection set $S$ of the hitting game. Throughout the execution,
  node $0$ is connected to all nodes in $[\beta]$ and it is not
  connected to any other node. Throughout the simulation of the
  broadcast algorithm, we use $I$ and $U$ to denote the set of
  informed and uninformed nodes, respectively (a node is informed iff
  it knows the broadcast message $\mathcal{M}$). Clearly, as soon as
  the source node broadcasts $\mathcal{M}$, the set of informed nodes
  is $I=\set{0,\dots,\beta}$ and we thus have
  $U=\set{\beta+1,\dots,n-1}$. To simplify notation, we assume that
  already at the start of the simulation, all nodes in
  $\set{0,\dots,\beta}$ know $\mathcal{M}$ and thus, we start round
  $1$ with $I=\set{0,\dots,\beta}$ and $U =
  \set{\beta+1,\dots,n-1}$.
  We will assume that the number of uninformed nodes is always at
  least $2\ell$. As soon as it drops below, we stop carrying out the
  simulation.

  Throughout the simulation, we always assume that all nodes in $I$
  form a clique and all nodes in $U$ form a clique. Apart from this,
  the topology of the core network is determined by the target set of
  the hitting game that we are trying to win by simulating
  $\mathcal{A}$. Assume that in some round $r$ of the hitting game,
  the target set is $X(r)\subset[\beta]$. During the simulation, in
  the backbone network we then use the nodes in $X(r)$ as bridge nodes
  to connect the informed nodes to the uninformed nodes. Each node
  $x\in X(r)$ is connected to exactly one node $n(x) \in U$ such that
  each node in $U$ is connected to at most one node in $X(r)$. We
  assume that whenever a new node is added to $X(r)$, its neighbor in
  $U$ is chosen uniformly at random among all nodes in $U$ which are
  not already connected to a bridge node in $X(r)$. Note that the
  size of $X(r)$ is always at most $2\ell$ and because we assumed
  that $|U|\geq 2\ell$, we can always do such an assignment of bridge
  nodes. Whenever the player makes a successful guess $x\in X(r)$, we
  move $x$ to the set of informed nodes $I$ and we connect $x$ with
  all nodes in $I$ and disconnect it with all nodes in the remaining
  set $U$ of uninformed nodes. Note that in the hitting game, after a
  successful guess $x\in X(r)$, $x$ is also removed from the target
  set. The target network at any time is either the described core network 
  (backbone) or the complete graph $K_n$, for which the choice will be 
  explained later. 
 
  \paragraph{The Simulation}
  The simulation of the broadcast algorithm $\mathcal{A}$ is done in a
  round-by-round manner. As the dynamic topology used
  in the simulation depends on the target set of the hitting game, the
  player $\mathcal{P}$ of the hitting game does not know the dynamic
  topology. We need to show that $\mathcal{P}$ can still correctly
  simulate the behavior of the broadcast algorithm.

  As discussed above, we assume that at the beginning of the
  simulation, the set of informed nodes is $I = \set{0,\dots,\beta}$.
  Each round of $\mathcal{A}$ is simulated by $\mathcal{P}$ by making
  at most $c\ln n$ guesses in the hitting game. More specifically, a
  given round $r$ of $\mathcal{A}$ is simulated as follows.

  First, note that because we assume that the adversary is
  $1$-oblivious, $\mathcal{P}$ can base the graph of round $r$ on the
  states of all nodes at the beginning of the round. Hence, in
  particular, the graph of round $r$ can depend on the probability
  $p_v(r)$ with which each node $v\in I$ transmits in the given
  round. We define a round $r$ of $\mathcal{A}$ to be \emph{busy} if
  $\sum_{v\in I} p_v(r) > \frac{c}{2}\ln n$, otherwise a round $r$ is
  called \emph{free}. In a busy round, the network graph is assumed to
  be the complete graph $K_n$ and in a free round, the network graph
  is assumed to be exactly the backbone graph as described above. We
  assume that $\mathcal{P}$ always knows the set of informed nodes and
  because only informed nodes are allowed to transmit, $\mathcal{P}$
  can determine all messages which are transmitted in a round by
  simulating the random decisions of the nodes in $I$. We say that the
  simulated execution of $\mathcal{A}$ is \emph{bad} if either there
  is a free round in which more than $c\ln n$ nodes in $I$ decide to
  transmit a message or if there is a busy round in which
  exactly one node in $I$ decides to transmit. Otherwise, the
  simulated execution is called \emph{good}. If an execution
  turns out to be bad, $\mathcal{P}$ stops the simulation of
  $\mathcal{A}$ and simply continues making random guesses until it
  wins the hitting game. Note that the expected time to win the
  hitting game in this way is at most $O(\beta^2/\ell)$ as unless
  there have been at least $\Omega(\ell)$ successes during the last
  $\phi=O(\beta)$ rounds of the hitting game, the probability for a
  successful guess is always $\Omega(\ell/\beta)$. As long as the
  number of simulated rounds $f(n)$ of $\mathcal{A}$ is polynomial in
  $n$ (and thus also in $\beta$), for an arbitrary given constant
  $d>0$ and sufficiently large constant $c$, the probability to
  obtain a good execution is at least $1-1/n^d$. For sufficiently
  large constant $c$, the expected time to win the hitting game is
  therefore dominated by the expected time to win the game conditioned
  on the event that the simulation of $\mathcal{A}$ creates a good
  execution. In the following, we therefore assume that the generated
  broadcast execution is good. In the following, we also assume that
  in the current phase (of length $\varphi$) of the hitting game,
  there are still at least $c\ln n$ guesses that can be made. If this
  is not the case, player $\mathcal{P}$ first makes a sequence of
  unsuccessful guesses to finish the phase ($\mathcal{P}$ can for
  example repeat the last guess it has made before to make sure it is
  not successful). As we assumed that $T$ is sufficiently large, we
  can assume that $\varphi \gg c\ln n$ and therefore we only waste a
  small fraction of all guesses by doing this.

  Let us first assume that a simulated round $r$ of $\mathcal{A}$ is
  busy. As in this case, the communication network is a complete graph
  and since we assume that in a good execution, no node gets isolated 
  (transmits alone),
  every node receives silence and we therefore do not need to simulate
  any receive behavior. In this case, we also do not make any guesses
  in the hitting game. If round $r$ is free, the number of nodes that
  transmit is between $0$ and $c\ln n$. First recall that the nodes
  in $I$ are fully connected and $\mathcal{P}$ can therefore clearly
  simulate their receive behavior. Further, let
  $Z(r)\subseteq [\beta]$ be the set of nodes in $\beta$ which are
  transmitting in round $r$. For each $z\in Z(r)$, player
  $\mathcal{P}$ uses $z$ as a guess in the hitting game. Note that
  because there are at most $c\ln n$ guesses to be made and because
  we assumed that there are still at least $c\ln n$ guesses in the
  current phase of the hitting game, during making the guesses for all
  $z\in Z(r)$, we do not change the phase (and thus the target set) in
  the hitting game. The node $z$ therefore is a bridge node connecting
  $I$ to a node $n(z)\in U$ in round $r$ of the broadcast algorithm if
  and only if $z$ is a successful guess. In that case, $n(z)$ is a
  uniformly random node in $U$. Hence, if $z$ is a successful guess,
  $\mathcal{P}$ chooses $n(z)$ uniformly at random in $U$ and it moves
  $n(z)$ from $U$ to $I$. Note that $z$ is also removed from the
  target set, and all connections of $n(z)$ to nodes in the remaining set $U$
  are removed. Note also that by choosing $n(z)$ uniformly at random
  in $U$, player $\mathcal{P}$ does not only simulate the randomness
  of the broadcast algorithm, but it also simulates the randomness of
  the adversary. As long as the execution is good, in the given
  dynamic network, the broadcast algorithm informs a new node if and
  only if one of the bridge nodes $v$ transmits in a free round. For
  any bridge node $v$ which transmits in a free round, the
  corresponding uninformed bridge node $n(v)$ gets informed. The
  described simulation therefore correctly simulates the broadcast
  algorithm and it informs a new node if and only if it makes a
  correct guess. As we only stop the simulation once the number of
  uninformed nodes drops below $2\ell$, we need at least
  $n-(\beta+1)-2\ell+1 = \lceil n/2\rceil - \ell \geq \beta$
  successful guesses and thus win the game to stop the simulation.

  It remains to show that the dynamic network used in the simulation
  is $T$-interval connected. Every $\varphi$ guesses and thus after at
  least $\varphi / (c\ln n)$ rounds of the simulation, we add $\ell$
  new edges connecting some $v\in I$ and $n(v)\in U$. As long as $v$
  is not used as a guess, such an edge remains for $2\varphi$
  guesses. As long as it is always guaranteed that one of these edges
  survives the next $2\varphi$ guesses (and thus at least
  $2\varphi / (c\ln n)$ rounds), the network is at least
  $T=\varphi/(c\ln n)$-interval connected. Hence, the network is not
  guaranteed to be $T$-interval connected if there is a phase $j$ in
  the hitting game such that all the elements of the set $Y_j$ added
  to the target set at the beginning of phase $j$ are successfully
  guessed by the end of phase $j+1$. Recall that in this case, the
  player also wins the game and therefore the claim of the lemma
  follows.
\end{proof}

For $k=1$, the statement of our main lower bound theorem (Theorem
\ref{thm:lowerbound}) now directly follows by combining Lemmas
\ref{le:gamelb} and \ref{le:reduction}.

\begin{lemma}\label{lemma:simplelower}
  For every constant $\eps>0$ and every $T\leq n^{1-\eps}$, the
  expected time to solve single-message broadcast in $T$-interval
  $1$-connected radio networks against a $1$-oblivious adversary is at
  least $\Omega\big(\frac{n^2}{\log n}\big)$.
\end{lemma}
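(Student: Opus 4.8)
The plan is to combine the two lemmas just proved, \cref{le:gamelb} and \cref{le:reduction}, to derive the lower bound for $k=1$. Suppose for contradiction that there is an algorithm $\mathcal{A}$ solving single-message broadcast in $T$-interval $1$-connected networks against a $1$-oblivious adversary in expected time $f(n) = o(n^2/\log n)$, for some $T \le n^{1-\eps}$ and sufficiently large $T$ (the small-$T$ case can be handled separately, as discussed below). By \cref{le:reduction}, such an $\mathcal{A}$ yields a player $\mathcal{P}$ that wins the $(\beta,\ell,\varphi)$-periodic hitting game in expected $O(f(n)\log n)$ rounds, where $\beta = \lfloor n/2\rfloor - \ell$, $\ell$ is a constant, and $\varphi = cT\log n$.

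The next step is to check that the parameters of this game instance satisfy the hypotheses of the lower bound \cref{le:gamelb}, which requires $\varphi \le \beta/3$ and $\ell \ge \ell_0$. Since $\varphi = cT\log n$ and $\beta = \Theta(n)$, the condition $\varphi \le \beta/3$ reduces to $T\log n = O(n)$, which holds when $T \le n^{1-\eps}$ and $n$ is large enough (here one uses that $T^{1-\eps}$-type slack swallows the extra $\log n$ factor; more precisely $T\log n \le n^{1-\eps}\log n = o(n)$). The constant $\ell$ from \cref{le:reduction} can be taken at least as large as $\ell_0$ by choosing constants appropriately. Hence \cref{le:gamelb} applies and guarantees that any player needs at least $\Omega(\beta^2/\ell) = \Omega(n^2)$ rounds in expectation to win the game, since $\ell = O(1)$ and $\beta = \Theta(n)$.

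Putting the two bounds together gives $O(f(n)\log n) \ge \Omega(n^2)$, so $f(n) \ge \Omega(n^2/\log n)$, contradicting our assumption and establishing the claimed lower bound on the expected running time of any broadcast algorithm. The main obstacle I expect is bookkeeping the parameter ranges: one must verify that the reduction's hidden "sufficiently large $T$" assumption and the game lemma's $\varphi \le \beta/3$ constraint are simultaneously compatible with the full range $T \le n^{1-\eps}$, and that the constant $\ell$ can be fixed consistently across both lemmas. For genuinely small $T$ (in particular $T=1$), the reduction's largeness assumption on $T$ may fail, but there the claimed bound $\Omega(n^2/\log n)$ already follows from the original $T=1$, $k=1$ result of \cite{clementi09}, so the two regimes together cover all $T \le n^{1-\eps}$ and complete the proof.
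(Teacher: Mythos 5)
Your proposal is correct and follows essentially the same route as the paper's proof: assume for contradiction that $f(n)=o(n^2/\log n)$, invoke \cref{le:reduction} to obtain a player winning the $(\lfloor n/2\rfloor-\ell,\ell,cT\log n)$-periodic hitting game in $o(n^2)$ expected rounds, and contradict the $\Omega(\beta^2/\ell)=\Omega(n^2)$ bound of \cref{le:gamelb}; your explicit verification that $\varphi = cT\log n \leq \beta/3$ holds when $T\leq n^{1-\eps}$, and that $\ell$ can be fixed as a sufficiently large constant across both lemmas, only makes explicit what the paper leaves implicit. One caveat on your small-$T$ patch: appealing to \cite{clementi09} covers only $T=1$ literally, since a lower bound against the less constrained $1$-interval adversary does not automatically transfer to larger $T$ (a larger $T$ restricts the adversary's choices), so intermediate values of $T$ below the reduction's ``sufficiently large'' threshold are not actually handled by it --- but the paper's own proof is silent on this regime as well, so this does not place your argument behind the paper's.
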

\begin{proof}
  For the sake of contradiction let us assume that $\mathcal{A}$ can
  solve broadcast for any $T$-interval 1-connected network in
  $f(n)=o(n^2/\ell \log n)$ rounds, then based on
  Lemma~\ref{le:reduction} a player can solve any instance of the
  $(n/2, \ell , cT\log n)$-periodic hitting game in $o(n^2/\ell)$
  rounds which contradicts Lemma~\ref{le:gamelb} and this proves the
  necessity of $\Omega (n^2/\ell \log n)$ rounds to solve broadcast in
  any $T$-interval $1$-connected network. Based on Lemma~\ref{le:gamelb}, 
  by choosing sufficiently large constant value for $\ell$ the claimed lower bound follows.
\end{proof}

In order to obtain Theorem \ref{thm:lowerbound}, we need to generalize
the above result from $T$-interval $1$-connected networks to
$T$-interval $k$-connected networks for arbitrary $k\geq 1$. As shown
below, this can be achieved by using a simple generic reduction.

\begin{theoremR}{thm:lowerbound}
  For every constant $\eps>0$ and every $T\leq (n/k)^{1-\eps}$, the
  expected time to solve single-message broadcast in $T$-interval
  $k$-connected radio networks against a $1$-oblivious adversary is at
  least $\Omega\big(\frac{n^2}{k^2\log n}\big)$.
\end{theoremR}
\begin{proof}
  Given an $n$-node graph $G$, let $H_k(G)$ be the graph which is
  obtained by replacing each node of $G$ by a clique of size $k$ and
  by replacing each edge $\set{u,v}$ of $G$ by a complete bipartite
  subgraph $K_{k,k}$ between the two $k$-cliques representing $u$ and
  $v$. If $G$ is connected, in order to disconnect $H_k(G)$ by
  deleting some nodes, we need to completely remove at least one of
  the $k$-cliques representing the nodes of $G$. Hence, if $G$ is
  connected, $H_k(G)$ is $k$-vertex connected. It follows in the same
  way that if we have a dynamic graph $G_1,\dots,G_t$ which is
  $T$-interval $1$-connected, the dynamic graph
  $H_k(G_1),\dots,H_k(G_t)$ is $T$-interval $k$-connected. Even if all
  nodes of such a graph $H_k(G_i)$ know to which of the cliques
  representing the nodes of $G_i$ they belong, solving broadcast in
  the dynamic graph $H_k(G_1),\dots,H_k(G_t)$ cannot be easier than
  solving broadcast in $G_1,\dots,G_t$. If each graph $H_k(G_i)$ has
  $N$ nodes, the graphs $G_i$ have $N/k$ nodes and the
  claimed lower bound directly follows by applying Lemma
  \ref{lemma:simplelower}.
\end{proof}


\section{Impossibility of Broadcast Against a \boldmath$0$-Oblivious
  Adversary}\label{sec:impossibility}

In this section, we prove the impossibility result that we stated in
Section \ref{sec:contributions}. We show that unless $T$ is almost
equal to $n$, the global broadcast problem cannot be solved in the
presence of a $0$-oblivious adversary, even for very large vertex
connectivity $k$.

\begin{theoremR}{thm:impossibility}
  For any $k\geq 1$ and any $T<n-k$, it is not possible to solve
  single-message broadcast in $T$-interval $k$-connected radio
  networks against a $0$-oblivious adversary.
\end{theoremR}
\begin{proof}
  We show that a strongly adaptive adversary (i.e., a $0$-oblivious
  adversary) can apply a simple strategy to prevent any algorithm from
  solving the global broadcast problem in a $T$-interval $k$-connected
  network, where $T<n-k$.  Consider the following adversary strategy
  to determine the sequence of network topologies.
		
  The adversary partitions the $n$ nodes into two distinct sets $A$
  and $B$, such that $A$ includes the source node and is of size
  $T+k$, and $B$ is of size $n-(T+k)$.  Since $T<n-k$, there exists at
  least one node in $B$. Note that at the beginning, no node in $B$
  knows the broadcast message $\mathcal{M}$ (or anything about
  $\mathcal{M}$). Because the adversary is $0$-oblivious, in each
  round $r$, it can determine the graph after all nodes have made
  their random decisions. It can therefore determine the graph based
  on which nodes transmit.

  If in a round $r$, either $0$ nodes transmit or at least $2$ nodes
  transmit, the network graph is chosen to be the complete graph. Note
  that in such a round, there is either silence or all nodes
  experience a collision. In both cases, all listening nodes receive
  $\bot$ and therefore no node in $B$ can learn something about
  $\mathcal{M}$.

  If in a round, exactly one node $v$ in $A$ transmits, the network
  graph consists of all edges except the edges connecting $v$ to nodes
  in $B$. Like this, also in this case all nodes in $B$ receive $\bot$
  and they therefore cannot learn something about $\mathcal{M}$.

  It remains to show that the given dynamic graph is $T$-interval
  $k$-connected. During the whole execution, $A$ is a clique
  consisting of $T+k$ nodes. Hence, $A$ is a $T$-interval
  $k$-connected network. To show that whole $n$-node network is also a
  $T$-interval $k$-connected network, it is sufficient to show that
  for any node $v\in B$, in any $T$ consecutive rounds, there exist at
  least $k$ fixed edges from $v$ to the nodes in $A$. To do so, fix
  some arbitrary time interval of $T$ consecutive rounds. During the
  time interval, there are at most $T$ rounds in which exactly one
  node transmits. Therefore, because $|A|=T+k$, there are at least $k$
  nodes in $A$ which do not transmit alone during the given time
  interval. The edges from these $k$ nodes to all nodes in $B$ are
  therefore available throughout the $T$ rounds. Therefore throughout
  any interval of $T$ rounds, each node in $B$ is connected to a set
  of at least $k$ nodes in $A$. Consequently, the constructed dynamic
  network is $T$-interval $k$-connected.
\end{proof}

Notice that at least for store-and-forward algorithms even collision
detection does not help to overcome the impossibility result. The
$0$-oblivious adaptive adversary knows the random choices of the
algorithm in the current round and it can thus prevent any
progress.

We also note that the above result turns out to be tight in the
following sense. If $T \geq n-k$, global broadcast can be solved. If
in each round, every node independently tries to broadcast with some
probability (say $1/n$), if $T + k \geq n$ there is a non-zero
probability (it may be very small) that $T$ different nodes are
isolated in $T$ consecutive rounds. Consider an interval of $T$ rounds
and let $I$ and $U$ be the sets of informed and uninformed nodes at
the beginning of this interval. From $T$-interval $k$-connectivity, we
get that there are at least $k$ nodes in $I$ which are stably
connected to nodes in $I$ throughout the $T$ rounds. Before broadcast
is solved, we have $|I|\leq n-1$ and if in the $T$ rounds, $T$
different nodes in $I$ are isolated, at least one of the $k$ nodes
stably connected to $U$ gets isolated and we can therefore make
progress. Note that for $T=n-k$, the probability for making progress
might be exponentially small, resulting in an exponential running time
for the broadcast problem. Note however also that once
$T\geq cn\log n$ for a sufficiently large constant $c$, it is not hard
to show that broadcast can be solved in polynomial time against a
$0$-oblivious adversary and if $T$ is larger than $cn\log^2n$ for a
sufficiently large constant $c$, it is shown in \cite{dualgraph}, that
it can be solved in time $O\left(n\log^2 n\right)$.


\hide{Anisur: 

*** \todo{ to be included in the end of Subsection 4.1} ***

We can easily generalize the above result into the case of $k$ connected networks. In particular, we show an $\Omega \left(n^2/k^2\tau\log n\right)$ lower bound for the global broadcast in $T$-interval $k$-connected radio networks against a $\tau$-oblivious adversary for $\tau \geq 1$.      

\begin{theorem}\label{thm:lowerbound-k-connect}
There exists a $T$ interval $k$ connected radio network such that any randomized algorithm requires at least $\Omega \left(\frac{n^2}{k^2\tau\log n}\right)$ time to solve global broadcast against a $\tau$-oblivious adversary, where $1 \leq T \leq \left(\frac{n}{k}\right)^{1-\eps}$ for some constant $\eps$. 
\end{theorem}
\begin{proof}[Proof sketch]
Given a $n$-node graph $G$, we can always construct a new graph of $nk$ nodes replacing each node of $G$ by a clique of size $k$. Assume the above $\tau$-dual $T$-interval $1$-connected radio network of $n/k$ nodes. Then replace every node in this graph by a clique of size $k$ and for each edge, add complete bipartite connection between two cliques. See the Figure~\ref{fig:fragment-graph}. Clearly this new graph has $n$ nodes in total and it is $k$ connected. Hence the $\tau$-dual $T$-interval $1$-connected graph of size $n$ is equivalent (i.e., has similar structure) to a $\tau$-dual $T$-interval $k$-connected graph of size $n/k$. Therefore, we can apply the same reduction from the $(\alpha,\beta,l,\varphi)$-periodic hitting game to get a lower bound  $\Omega \left(\frac{n^2}{k^2\tau\log n}\right)$ of the global broadcast problem in $\tau$-dual $T$-interval $k$-connected radio networks.  
\end{proof}

The impossibility result is also valid even with the collision detection. The strongly adaptive adversary knows the random choices of the algorithm in the current round and  
}




\bibliographystyle{abbrv}
\bibliography{references}

\end{document}